\renewcommand{\emptyset}{\varnothing}
\newcommand{\mathds}{\mathbb}
\newcommand{\uppi}{\pi}
\renewcommand{\widehat}{\hat}
\newtheorem{them}{Theorem}[section]
\newtheorem{corollary}{Corollary}[section]
\newtheorem{lemma}{Lemma}[section]
\newcommand{\sgn}{\operatorname{sgn}}
\begin{document}
\begin{frontmatter}

\title{On the almost sure convergence of adaptive allocation procedures}
\runtitle{Convergence of adaptive allocation procedures}

\begin{aug}
\author[1]{\inits{A.}\fnms{Alessandro} \snm{Baldi Antognini}\thanksref{1}\ead[label=e1]{a.baldi@unibo.it}} \and
\author[2]{\inits{M.}\fnms{Maroussa} \snm{Zagoraiou}\corref{}\thanksref{2}\ead[label=e2]{maroussa.zagoraiou@unical.it}}
\address[1]{Department of Statistical Sciences, University of Bologna,
Via Belle Arti 41, 40126 Bologna, Italy.\\ \printead{e1}}
\address[2]{Department of Economics, Statistics and Finance,
University of Calabria, 87036 Arcavacata di Rende (CS), Italy.
\printead{e2}}
\end{aug}

\received{\smonth{3} \syear{2013}}
\revised{\smonth{11} \syear{2013}}

%
\begin{abstract}
In this paper, we provide some general convergence results for adaptive
designs for treatment comparison, both in the absence and presence of
covariates. In particular, we demonstrate the almost sure convergence
of the treatment allocation proportion for a vast class of adaptive
procedures, also including designs that have not been formally
investigated but mainly explored through simulations, such as
Atkinson's
optimum biased coin design, Pocock and Simon's minimization method
and some of its generalizations. Even if the large majority of the
proposals in the literature rely on continuous allocation rules, our
results allow to prove via a unique mathematical framework the
convergence of adaptive allocation methods based on both continuous
and discontinuous randomization functions. Although several examples
of earlier works are included in order to enhance the applicability,
our approach provides substantial insight for future suggestions,
especially in the absence of a prefixed target and for designs
characterized by sequences of allocation rules.
\end{abstract}

%
\begin{keyword}
\kwd{Biased Coin Designs}
\kwd{CARA Procedures}
\kwd{minimization methods}
\kwd{Response-Adaptive designs}
\kwd{sequential allocations}
\end{keyword}

\end{frontmatter}

\section{Introduction}

The past five decades have witnessed a sizeable amount of statistical
research on adaptive randomized designs in the context of clinical
trials for treatment comparison.
These are sequential procedures where at each step the accrued
information is used to make decisions about the way of randomizing the
allocation of the next subject.

Starting from the pioneering work of Efron's Biased Coin Design (BCD)
\cite{Efr71}, several authors have suggested adaptive procedures that,
by taking into account at each step only previous assignments, are
aimed at achieving balance between two available treatments (see, e.g.,
\cite{Bag04,Smi84a,Smi84b,Soa83,Wei78}). We shall refer to these as
Assignment-Adaptive methods. Since clinical trials usually involve
additional information on the experimental units, expressed by a set of
important covariates/prognostic factors, Pocock and Simon \cite{Poc75}
and other authors (see, for instance, \cite{Atk82,Baz11,Beg80,Tav74})
proposed Covariate-Adaptive designs. These methods modify the
allocation probabilities
at each step according to the assignments and the characteristics of
previous statistical units, as well
as those of the present subject, in order to ensure balance between the
treatment groups
among covariates for reducing possible sources of heterogeneity.

Motivated by ethical demands, another different viewpoint is the
Response-Adaptive randomization methods. These are allocation rules
introduced with the aim of skewing the assignments towards the
treatment that appears to be superior at each step (see, e.g., \cite
{Atk05a}) or, more generally, of converging to a desired target
allocation of the treatments which combines inferential and ethical
concerns \cite{Bag2010,Tym07}. The above mentioned framework has been
recently extended in order to incorporate covariates, which has led to
the introduction of the so-called Covariate-Adjusted Response-Adaptive
(CARA) procedures, that is, allocation methods that sequentially modify
the treatment assignments
on the basis of earlier responses and allocations, past covariate
profiles and the characteristics
of the subject under consideration. See \cite{RosCARA01,Zha07} and the
cornerstone book by Hu and Rosenberger \cite{Hu07}.

In general, given a desired target it is possible to adopt different
procedures converging to it, such as the Sequential Maximum Likelihood
design \cite{Mel01}, the Doubly-adaptive BCD \cite{Eis94,Hu04} and
their extensions with covariates given by Zhang \textit{et al.}'s CARA design
\cite{Zha07} and the Covariate-adjusted Doubly-adaptive BCD \cite
{Zha09}, having well established asymptotic properties. However, in the
absence of a given target one of the main problems lies in providing
the asymptotic behaviour of the suggested procedure. This is especially
true in the presence of covariates, where theoretical results seem to
be few and the properties of the suggested procedures have been
explored extensively through simulations; indeed, as stated by
Rosenberger and Sverdlov \cite{Ros08} ``very little theoretical work
has been done in this area, despite the proliferation of papers''. For
instance, even if Pocock and Simon's minimization method is widely used
in the clinical practice, its theoretical properties are still largely
unknown (indeed, Hu and Hu's results \cite{Hu12} do not apply to this
procedure), as well as the properties of several extensions of the
minimization method and of Atkinson's Biased Coin Design \cite{Atk82}.

Moreover, although the large majority of the proposals are based on
continuous and prefixed allocation rules, updated step by step on the
basis of the current allocation proportion and some estimates of the
unknown parameters (usually based on the sufficient statistics of the
model), the recent literature tends to concentrate on discontinuous
randomization functions, such as the Efficient Randomized-Adaptive
Design (ERADE) \cite{Hu09}, because of their low variability.

In this paper, we provide some general convergence results for adaptive
allocation
procedures both in the absence and presence of covariates, continuous
or categorical. By combining the concept of downcrossing (originally
introduced in \cite{Hill80}) and stopping times of stochastic
processes, we demonstrate the almost
sure convergence of the treatment allocation proportion for a large
class of adaptive procedures, even in the absence of a given target,
and thus our approach provides substantial insight for future
suggestions as well as for several existing procedures that have not
been theoretically explored \cite{Her05,Signor93}. In particular, we
prove that Pocock and Simon's minimization method \cite{Poc75} is
asymptotically balanced, both marginally and jointly, showing also the
convergence to balance of Atkinson's BCD \cite{Atk82}. The suggested
approach allow to prove through a unique mathematical framework the
convergence of continuous and discontinuous randomization functions
(like e.g., the Doubly-Adaptive Weighted Differences design \cite
{Ger06}, the Reinforced Doubly-adaptive BCD \cite{Baz12}, ERADE \cite
{Hu09} and Hu and Hu's procedure \cite{Hu12}), taking also into
account designs based on Markov chain structures, such as the
Adjustable BCD \cite{Bag04} and the Covariate-adaptive BCD \cite
{Baz11}, that can be characterized by sequences of allocation rules.
Moreover, by removing some unessential conditions usually assumed in
the literature, our results allow to provide suitable extensions of
several existing procedures.

The paper is structured as follows. Even if Assignment-Adaptive and
Response-Adaptive procedures can be regarded as special cases of CARA
designs, we will treat them separately for the sake of clarity, in
order to describe the general proof scheme in a simple setting, whereas
Covariate-Adaptive methods will be discussed as particular case of CARA rules.
To treat CARA rules in the presence of solely categorical covariates we
need to extend the concept of downcrossing in a vectorial framework;
this generalization is not used for CARA procedures with continuous
prognostic factors and therefore these two cases will be analyzed separately.
Starting from the notation
in Section~\ref{sec2}, Section~\ref{sec3}
deals with Assignment-Adaptive designs, while
Section~\ref{sec4} discusses Response-Adaptive procedures.
Sections~\ref{sec5} and~\ref{sec6} illustrate the asymptotic behavior
of CARA methods in
the case of continuous and categorical covariates, respectively.
Section~\ref{sec7} discusses the relationship between the proposed methodology
and the theory of Stochastic Approximation. To avoid cumbersome
notation, the paper mainly deals with the case of
just two treatments, but the suggested methodology is shown to extend
to more than two (see Section~\ref{sec3.1}).

\section{Notation}\label{sec2}

Suppose that patients come to the trial sequentially and are assigned
to one of two treatments, $A$ and $B$, that we want to compare.
At each step $i\geq1$, a subject will be assigned to one of the
treatments and a response $Y_i$ will be observed. Typically,
the outcome $Y_i$ will depend on the treatment, but it may also depend
on some characteristics
of the subject expressed by a vector $\mathbf{Z}_i$ of
covariates/concomitant variables. We assume that $\{\mathbf{Z}_i\}
_{i\geq1}$ are i.i.d. covariates that are not under the experimenters'
control, but they can be measured before assigning a treatment, and,
conditionally on the treatments and the covariates (if present),
patients' responses are assumed to be independent. Let $\delta_{i}$
denote the $i$th allocation, with $\delta_{i}=1$ if the $i$th subject
is assigned to $A$ and $0$ otherwise; also,
$\widetilde{N}_n=\sum_{i=1}^{n} \delta_{i}$ is the number of
allocations to $A$ after $n$ assignments and $\uppi_n$ the corresponding
proportion, that is, $\uppi_n=n^{-1}\widetilde{N}_n$.

In general, adaptive allocation procedures can be divided in four
different categories according to the experimental information used for
allocating the patients to the treatments. Suppose that the $(n+1)$st
subject is ready to be randomized; if the probability of assigning
treatment $A$ depends on:
\begin{enumerate}[(iii)]
\item[(i)] {\spaceskip=0.2em plus 0.05em minus 0.02emthe past allocations, that is, $\Pr(\delta_{n+1}=1\vert
\delta_1,\ldots,\delta_n)$, we call such a procedure\break
Assignment-}Adaptive (AA);
\item[(ii)] earlier allocations and responses, that is, $\Pr(\delta
_{n+1}=1\vert\delta_1,\ldots,\delta_n; Y_1,\ldots,Y_n)$, then the
design is Response-Adaptive (RA);
\item[(iii)] the previous allocations and covariates, as well as the
covariate of the present subject, that is, $\Pr(\delta_{n+1}=1\vert
\delta_1,\ldots,\delta_n; \mathbf{Z}_1,\ldots,\mathbf
{Z}_{n},\mathbf{Z}_{n+1})$, the procedure is Covariate-Adaptive (CA);
\item[(iv)] the assignments, the outcomes and the covariates of the
previous statistical
units, as well as the characteristics of the current subject that will
be randomized, that is, $\Pr(\delta_{n+1}=1\vert\delta_1,\ldots
,\delta_n; Y_1,\ldots,Y_n; \mathbf{Z}_1,\ldots,\mathbf
{Z}_{n+1})$, then the rule is called Covariate-Adjusted
Response-Adaptive (CARA).
\end{enumerate}
From now on, we will denote with $\Im_n$ the $\sigma$-algebra
representing the natural history of the experiment up to step $n$
associated with a given procedure belonging to each category (with $\Im
_0$ the trivial $\sigma$-field). For instance, in the case of AA
rules, $\Im_n=\sigma \{\delta_1,\ldots,\delta_n \}$,
whereas for RA designs $\Im_n=\sigma \{\delta_1,\ldots,\delta
_n; Y_1,\ldots,Y_n \}$.

The sequence of allocations is a stochastic process and a general way
of representing
it is by the sequence of the conditional probabilities of
assigning treatment $A$ given the past information at every stage, that is,
$\Pr(\delta_{n+1}=1\vert\Im_n)$ for $n\in\mathds{N}$, which is
called the allocation function. Even if the large majority of suggested
procedures assume continuous allocation rules, in this paper, we take
also into account designs with discontinuous randomization functions,
provided that their set of discontinuities is nowhere dense.

\section{Assignment-Adaptive designs}\label{sec3}

Assignment adaptive rules, which depend on the past history
of the experiment only through the sequence of previous allocations,
were proposed as a suitable trade-off between balance (i.e.,
inferential optimality)
and unpredictability in the context of randomized clinical trials.
Indeed, if the main concern is maximum precision of the results (without
ethical demands), as is well-known under the classical linear model
assumptions, balanced design is \emph{universally optimal} \cite
{Sil80}, since it minimizes the most common inferential criteria for
estimation and maximizes the power of the classical statistical tests.
The requirement of balance is considered particularly cogent for phase
III trials, where patients are sequentially enrolled and the total
sample size is often a-priori unknown, so that keeping a reasonable
degree of balance at each step, even for small/moderate samples, is
crucial for stopping the experiment at any time under an excellent
inferential setting.

The simplest sequential randomized procedure that approaches balance is
the completely randomized (CR) design, where every allocation is to
either treatment with probability $1/2$ independently on the previous
steps; thus, $\delta_{1},\delta_{2},\ldots$ are i.i.d. $\operatorname{Be}(1/2)$ so
that, as $n$ tends to infinity, $\uppi_n \rightarrow1/2$ almost surely
from the SLLN for independent r.v.'s. Although CR could represent an
ideal trade-off between balance and unpredictability, this holds only
asymptotically. In fact, CR may generate large imbalances for small
samples, since $n^{-1/2}\uppi_n$ is asymptotically normal, and this may
induce a consistent loss of precision. For this reason, starting from
the pioneering work of Efron \cite{Efr71}, AA rules were introduced in
the literature in order to force the allocations at each step towards
balance maintaining, at the same time, a suitable degree of randomness.

In this section, we shall deal with AA procedures such that
%
%
\begin{equation}
\label{AAdesigns} \Pr(\delta_{n+1}=1\vert\Im_n)=
\varphi^{\mathrm{AA}}(\uppi_n), \qquad\mbox{for } n\geq1,
\end{equation}
where $\varphi^{\mathrm{AA}}\dvtx [0;1] \rightarrow[0;1]$.

%
\begin{definition}\label{DC1}
For any function $\psi\dvtx [0;1] \rightarrow[0;1]$, a point $t \in
[0;1]$ is called a \emph{downcrossing} of $\psi(\cdot)$ if
\[
\forall x<t,\qquad\psi(x) \geq t \quad\mbox{and}\quad\forall x>t,\qquad\psi(x)
\leq t.
\]
\end{definition}

Note that if the function $\psi(x)$ is decreasing, then there exists a
single downcrossing $t\in(0;1)$ and if the equation $\psi(x)=x$ admits
a solution then the downcrossing coincides with it. Clearly, if $\psi
(\cdot)$ is a continuous and decreasing function, then $t$ can be
found directly by solving the equation $\psi(x)=x$.

\begin{them}\label{prop1}
If the allocation function $\varphi^{\mathrm{AA}}(\cdot)$ in (\ref
{AAdesigns}) has a unique downcrossing $t \in(0;1)$, then
$\lim_{n\rightarrow\infty} \uppi_n=t$ a.s.
\end{them}

\begin{pf}
By using a martingale decomposition of the number of assignments to
treatment $A$, we will show that the asymptotic behavior of the
allocation proportion $\uppi_n$ coincides with that of the sequence of
downcrossing points of the corresponding allocation function (i.e., a
constant sequence in the case of AA procedures). The same arguments
will be generalized in the \hyperref[app]{Appendix} to the case of RA
and CARA rules
for random sequences of downcrossings.

At each step $n\geq1$,
%
%
\begin{equation}
\label{SA} \widetilde{N}_n = \sum_{i=1}^{n}
\delta_i=\sum_{i=1}^{n}\bigl\{
\delta _i-E(\delta_{i}| \Im_{i-1})\bigr\}+\sum
_{i=1}^{n}E(\delta_{i}| \Im
_{i-1})=\sum_{i=1}^{n}\Delta
M_i + \sum_{i=1}^{n}
\varphi^{\mathrm
{AA}} (\uppi_{i-1} ),
\end{equation}
where $\Delta M_i=\delta_i-E(\delta_i|\Im_{i-1})$, $\Im_n=\sigma\{
\delta_1,\ldots,\delta_n\}$ and $\uppi_0=0$. Then $\{\Delta M_i;
i\geq1\} $ is a sequence of bounded martingale differences with
$|\Delta M_i |\leq1$ for any $i\geq1$; thus the sequence $\{
M_{n}=\sum_{i=1}^{n}\Delta M_i ; \Im_n \}$ is a martingale with $\sum_{k=1}^{n} E[(\Delta M_i)^2 |\Im_{k-1}] \leq n$, so that as $n$ tends
to infinity $n^{-1}M_n\rightarrow0 $ a.s. Let $l_n=\max \{s\dvt 1
\leq s \leq n, \uppi_s \leq t  \}$, with $\max\emptyset=0$, then
at each step $i>l_n$ we have $\varphi^{\mathrm{AA}}  (\uppi_{i}
)
\leq t$. Note that
\begin{eqnarray*}
\widetilde{N}_n &= &\widetilde{N}_{l_n+1}
+ \sum_{k=l_n+2}^{n} \Delta M_k+
\sum_{k=l_n+2}^{n} E(\delta_{k}|
\Im_{k-1})
\\
&\leq& \widetilde{N}_{l_n} +1 +M_n -M_{l_n+1} +
\sum_{k=l_n+2}^{n} \varphi^{\mathrm{AA}} (
\uppi_{k-1} )
\\
&\leq& \widetilde{N}_{l_n} +1+M_n -M_{l_n+1} + \sum
_{k=l_n+2}^{n} t 
\end{eqnarray*}
and, since $\widetilde{N}_{l_n} \leq l_n t$, then
\[
\widetilde{N}_n - n t\leq M_n -M_{l_n+1} +1-t,
\]
namely
%
%
\begin{equation}
\label{eq1} \uppi_n - t\leq\frac{M_n -M_{l_n+1} +1-t}{n }.
\end{equation}
As $n\rightarrow\infty$, then $l_n\rightarrow\infty$ or $\sup_n
l_n< \infty$, and in either case the r.h.s. of (\ref{eq1}) goes to 0 a.s.
Thus $[\uppi_n - t] ^+ \rightarrow0$ a.s. and,
analogously, $ [(1-\uppi_n) -  (1- t ) ]^+
\rightarrow0$ a.s. Therefore, $\lim_{n\rightarrow\infty} \uppi
_n=t$ a.s.
\end{pf}

%
\begin{ex}
The completely randomized design is defined by letting $\Pr(\delta
_{n+1}=1\vert\Im_n)=1/2$ for every $n$. This corresponds to assume
$\varphi^{\mathrm{CR}}(x)=1/2 $ for all $x\in[0;1]$, which is
continuous and
does not depend on $x$; therefore, $\varphi^{\mathrm{CR}}(\cdot)$
has a single
downcrossing $t=1/2$ and thus $\uppi_n \rightarrow1/2$ a.s. as
$n\rightarrow\infty$. Clearly, this procedure can be naturally
extended to any given desirable target allocation that is a-priori known.
\end{ex}

%
\begin{ex}
Efron's BCD \cite{Efr71} is defined by
\[
\Pr(\delta_{n+1}=1\vert\Im_n)= %
\cases{ p, &
\quad$\mbox{if } D_n<0$,
\cr
1/2, &\quad$\mbox{if }
D_n=0, \qquad\mbox{for } n\geq1$,
\cr
1-p, & \quad$\mbox{if }
D_n>0$, } %
\]
where $D_n=2 \widetilde{N}_n-n$ is the difference between the
allocations to $A$ and $B$ after $n$ steps and $p\in[1/2;1]$ is the
bias parameter. Since $\sgn D_n =\sgn(\uppi_n-1/2)$, then Efron's rule
corresponds to
%
%
\begin{equation}
\label{BCD} \varphi^E(x)= %
\cases{ p, & \quad$\mbox{if }
x<1/2$,
\cr
1/2, & \quad$\mbox{if } x=1/2$,
\cr
1-p, & \quad$\mbox{if } x>1/2$, }
\end{equation}
which has a single downcrossing $t=1/2$ and therefore
$\lim_{n\rightarrow\infty} \uppi_n=1/2$ a.s.
Clearly, Theorem~\ref{prop1} allows to provide suitable extensions of
Efron's coin converging to any given desired target $t^*\in(0;1)$, namely
%
%
\begin{equation}
\label{Efornext} \varphi^{\tilde{E}}(x)= %
\cases{ p_2, &
\quad$\mbox{if } x<t^*$,
\cr
t^*, & \quad$\mbox{if } x=t^*$,
\cr
p_1,
& \quad$\mbox{if } x> t^*$, } %
\end{equation}
where $0\leq p_1 \leq t^* \leq p_2 \leq1$ and at least one of these
inequalities must hold strictly.
\end{ex}

%
\begin{rem}
Note that, from Theorem~\ref{prop1}, for the convergence to a given
desired target~$t^*$:
\begin{enumerate}[(iii)]
\item[(i)] the allocation function should be decreasing; this condition
is quite intuitive, since it corresponds to assume that, at each step,
if the current allocation proportion $\uppi_n$ is greater than $t^*$,
then the next allocation is forced to treatment $B$ with probability
greater than $t^*$ and this probability increases as the difference
$\uppi_n-t^*$ grows;
\item[(ii)] the continuity of the allocation rule is not required and
therefore it is possible to consider discontinuous randomization
functions like, for example, (\ref{BCD}) and (\ref{Efornext});
\item[(iii)] condition $\varphi^{\mathrm{AA}}(t^*)=t^*$ is not requested;
moreover, structures of symmetry of the allocation function are not
needed (e.g., in (\ref{Efornext}) condition $p_2=1-p_1$ is not
required), even if they are typically assumed in order to treat $A$ and
$B$ in the same way. For instance, the following AA procedure
\[
\varphi^{\mathrm{AA}^*}(x)= %
\cases{ 1, & \quad$\mbox{if } x\leq1/2$,
\cr
1/2, & \quad$\mbox{if } x> 1/2$, } %
\]
is asymptotically balanced, that is, $\uppi_n\rightarrow1/2$ a.s. as
$n$ tends to infinity.
\end{enumerate}
\end{rem}

%
\begin{corollary}\label{cor1}
Suppose that $\varphi^{\mathrm{AA}}$ is a composite function such
that $\varphi
^{\mathrm{AA}}(x)=h_1  [h_2  (x ) ]$,
where $h_1\dvtx D\subseteq\mathds{R} \rightarrow[0;1]$ is
decreasing and
$h_2\dvtx[0;1] \rightarrow D$ is continuous and increasing. If $d \in
D$ is
such that $h_1(d)=h_2^{-1}(d)$, then $\lim_{n\rightarrow\infty}
\uppi
_n=h_2^{-1}(d)$ a.s.
\end{corollary}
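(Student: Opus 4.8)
The plan is to reduce the statement to Theorem~\ref{prop1} by exhibiting $t:=h_2^{-1}(d)$ as the unique downcrossing of the allocation function $\varphi^{AA}=h_1\circ h_2$ in $(0;1)$. First I would check that $t$ is well defined: since $h_2$ is continuous and increasing on $[0;1]$ it is a bijection onto its range, which is an interval containing $d$ by hypothesis, so $h_2^{-1}(d)$ exists; moreover $t=h_1(d)\in[0;1]$, and we take it to lie in the open interval $(0;1)$ (equivalently, $d$ is an interior point of the range of $h_2$), which is exactly the setting required to invoke Theorem~\ref{prop1}.

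The core step is then a short monotonicity chase to verify the two inequalities in Definition~\ref{DC1}. For $x<t$, monotonicity of $h_2$ gives $h_2(x)<h_2(t)=d$, and since $h_1$ is decreasing this yields $\varphi^{AA}(x)=h_1\bigl(h_2(x)\bigr)\geq h_1(d)=h_2^{-1}(d)=t$. Symmetrically, for $x>t$ we get $h_2(x)>d$, hence $\varphi^{AA}(x)=h_1\bigl(h_2(x)\bigr)\leq h_1(d)=t$. Thus $t$ is a downcrossing of $\varphi^{AA}$. Uniqueness follows because $\varphi^{AA}$, being the composition of an increasing map with a decreasing one, is itself decreasing, and a decreasing function on $[0;1]$ admits a single downcrossing, as noted right after Definition~\ref{DC1}. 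Theorem~\ref{prop1} then gives $\lim_{n\rightarrow\infty}\pi_n=t=h_2^{-1}(d)$ a.s.

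The proof is essentially bookkeeping with the direction of inequalities, so there is no genuine obstacle; the only points deserving care are correctly tracking the order reversal produced by $h_1$ in combination with the inverse $h_2^{-1}$, and the boundary cases $d\in\{h_2(0),h_2(1)\}$ (i.e.\ $t\in\{0,1\}$), which are degenerate and are excluded by the standing requirement $t\in(0;1)$ needed for Theorem~\ref{prop1}.
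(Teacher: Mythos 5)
Your proposal is correct and follows essentially the same route as the paper: both reduce to Theorem~\ref{prop1} by observing that $\varphi^{AA}=h_1\circ h_2$ is decreasing and fixes $h_2^{-1}(d)$, hence has that point as its unique downcrossing. The only cosmetic difference is that you verify the two downcrossing inequalities directly from Definition~\ref{DC1}, whereas the paper invokes the remark that a fixed point of a decreasing allocation function is automatically its single downcrossing.
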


\begin{pf}
The proof follows easily from Theorem~\ref{prop1}. Indeed, $\varphi
^{\mathrm{AA}}(\cdot)$ is a decreasing function with $\varphi
^{\mathrm{AA}}
[h_2^{-1}(d) ]=h_1(d)=h_2^{-1}(d)$ and therefore $\varphi
^{\mathrm{AA}}(\cdot)$ has a single downcrossing in $h_2^{-1}(d)$.
\end{pf}

%
\begin{ex}\label{weiabcd}
Wei \cite{Wei78} defined his Adaptive BCD by letting
%
%
\begin{equation}
\label{weiallfunc} \Pr ( \delta_{n+1}=1\vert\Im_n ) =
\mathfrak{f} (2\uppi_n-1 ), \qquad\mbox{for } n\geq1,
\end{equation}
where $\mathfrak{f}\dvtx[-1;1] \rightarrow[0;1]$ is a continuous and
decreasing function s.t. $\mathfrak{f}(-x)=1-\mathfrak{f}(x)$.
Set $g(w)=2w-1\dvtx[0;1] \rightarrow[-1;1]$, Wei's allocation
function is
$\varphi^W(x)=\mathfrak{f} [g(x) ]$. Since
$g^{-1}(w)=(w+1)/2 $ for all $w\in[0;1]$, then
$g^{-1}(0)=1/2=\mathfrak{f}(0)$, that is, $1/2$ is the only
downcrossing of $\varphi^W(\cdot)$. Therefore, from Corollary~\ref
{cor1} it follows that $ \uppi_n\rightarrow1/2$ a.s. as $n
\rightarrow\infty$.
\end{ex}

%
\begin{rem}\label{remABCD}
Note that Theorem~\ref{prop1} still holds even if we assume different
randomization functions at each step by letting
$\Pr(\delta_{n+1}=1\vert\Im_n)=\varphi_n^{\mathrm{AA}}(\uppi
_n)$, provided
that $t\in(0;1)$ is the unique downcrossing of $\varphi_n^{\mathrm
{AA}}(\cdot
)$ for every $n\geq1$.
\end{rem}

%
\begin{ex}\label{exABCD}
The Adjustable Biased Coin Design (ABCD) proposed by Baldi Antognini
and Giovagnoli \cite{Bag04} is defined as follows. Let $F(\cdot)\dvtx
\mathds
{R}\rightarrow[0;1]$ be a decreasing function such that
$F(-x)=1-F(x)$, the ABCD assigns the $(n+1)$st subject to treatment $A$
with probability
$\Pr(\delta_{n+1}=1\vert\Im_n) =F(D_n)$, for $n\geq1$.
This corresponds to let
\[
\varphi_n^{\mathrm{ABCD}}(x) =F\bigl[n(2x-1)\bigr],\qquad n\geq1,
\]
and, from the properties of $F(\cdot)$, at each step $n$ the function
$\varphi_n^{\mathrm{ABCD}}(\cdot)$ is decreasing with $\varphi
_n^{\mathrm{ABCD}}
(1/2 )=1/2$. Thus $t=1/2$ is the only downcrossing of $\varphi
_n^{\mathrm{ABCD}}(\cdot)$ for every $n$, so that $\lim_{n
\rightarrow\infty
}\uppi_n= 1/2$ a.s.
\end{ex}

\subsection{The case of several treatments}\label{sec3.1}

Now we briefly discuss AA procedures in the case of several treatments
in order to show how the proposed downcrossing methodology can be
extended to $K> 2$ treatments. Even if the same mathematical structure
could also be applied to the other types of adaptive rules that will be
presented in Sections~\ref{sec4}--\ref{sec6}, we restrict the
presentation of
multi-treatment adaptive procedures only for AA designs, for the sake
of simplicity regarding the notation.

At each step $i\geq1$, let $\delta_{i\jmath}=1$ if the $i$th patient
is assigned to treatment $\jmath$ (with $\jmath=1,\ldots,K$) and 0
otherwise, and set $\boldsymbol{\delta}_i^t=(\delta_{i1},\ldots
,\delta_{iK})$ with $\boldsymbol{\delta}_i^t\boldsymbol{1}_K=1$
(where $\boldsymbol{1}_K$ is the $K$-dim vector of ones). After $n$
steps, let $\widetilde{N}_{n\jmath}=\sum_{i=1}^{n}\delta_{i\jmath
}$ be the number of allocations to treatment $\jmath$ and $\uppi
_{n\jmath}$ the corresponding proportion, i.e. $\uppi_{n\jmath
}=n^{-1}\widetilde{N}_{n\jmath}$; also, set $\widetilde{
\mathbf{N}}_n^t=(\widetilde{N}_{n1},\ldots,\widetilde{N}_{nK})$ and
$\boldsymbol{\uppi}_n^t=(\uppi_{n1},\ldots,\uppi_{nK})$, where
$\widetilde{\mathbf{N}}_n^t\boldsymbol{1}_K=n$ and $\boldsymbol
{\uppi}_n^t\boldsymbol{1}_K=1$.

In this setting, we consider a class of AA designs that assigns the
$(n+1)$st patient to treatment $\jmath$ with probability
%
%
\begin{equation}
 \Pr(\delta_{n+1,\jmath}=1\vert\Im_n)=
\varphi^{\mathrm
{AA}}_{\jmath} (\boldsymbol{\uppi}_n ), \qquad
\mbox{for } n\geq1,
\end{equation}
where $\Im_n=\sigma(\boldsymbol{\delta}_1,\ldots,\boldsymbol
{\delta}_n)$, $\varphi^{\mathrm{AA}}_{\jmath}$ is the allocation
function of
the $\jmath$th treatment and from now on we set $\boldsymbol{\varphi
}^{\mathrm{AA}}(\boldsymbol{\uppi}_n)= (\varphi^{\mathrm{AA}}_{1}
(\boldsymbol{\uppi
}_n  ),\ldots,\varphi^{\mathrm{AA}}_{K} (\boldsymbol
{\uppi}_n
) )$.

%
\begin{definition}\label{DC1bis}
Let $\mathbf{x}= (x_{1},\ldots,x_{K} )$, where $x_{\jmath
}\in[0;1]$ for any $\jmath=1,\ldots,K$, $\psi_{\jmath}(\mathbf
{x})\dvtx\break [0;1]^{K}\rightarrow[0;1]$ and set $\boldsymbol{\psi
}(\mathbf
{x})= (\psi_{1}(\mathbf{x}), \ldots, \psi_{K}(\mathbf
{x}) )$.
Then
$\mathbf{t}=  (t_{1},\ldots,t_{K} )\in[0;1]^K$ is
called a \emph{vectorial} \emph{downcrossing} of $\boldsymbol{\psi
}$ if for any $\jmath=1,\ldots,K$
\[
\mbox{for all } x_{\jmath}<t_{\jmath},\qquad\psi_{\jmath}(
\mathbf{x}) \geq t_{\jmath} \quad\mbox{and}\quad\mbox{for all }
x_{\jmath
}>t_{\jmath
}, \qquad\psi_{\jmath}(\mathbf{x}) \leq
t_{\jmath}.
\]
\end{definition}

Clearly, if $\psi_{\jmath}(\mathbf{x})$ is decreasing in $\mathbf
{x}$ (i.e., componentwise) for any $\jmath$,
then the vectorial downcrossing $\mathbf{t}$ is unique, with
$\mathbf{t}\in(0;1)^ K$; furthermore $\boldsymbol{\psi
}(\mathbf{t})=\mathbf{t}$, provided that the solution exists.

\begin{them}\label{thm1bis}
At each step $n$, suppose that $\varphi^{\mathrm{AA}}_{\jmath}
(\boldsymbol{\uppi}_n  )$ is decreasing in $\boldsymbol{\uppi}_n$
(componentwise) for any $\jmath=1,\ldots,K$, then
$\lim_{n\rightarrow\infty} \boldsymbol{\uppi}_n=\mathbf{t}$ a.s.
\end{them}

\begin{pf}
The proof follows easily from the one in Appendix~\ref{A3}, where $K$
treatments should be considered instead of the strata induced by the
categorical covariates.
\end{pf}

%
\begin{ex}\label{multitr}
In order to achieve balance, that is, $\uppi^{\ast}_{\jmath}=K^{-1}$
for any $\jmath=1,\ldots,K$, Wei \textit{et al.} \cite{Wei86}
considered the
following allocation rules:
%
%
\begin{equation}
\label{weirule1} \Pr(\delta_{n+1,\jmath}=1\vert\Im_n)=
\frac{\uppi_{n\jmath}^{-1}
-1}{\sum_{k=1}^K(\uppi_{nk}^{-1} -1) },
\end{equation}
and
%
%
\begin{equation}
\label{weirule2} \Pr(\delta_{n+1,\jmath}=1\vert\Im_n)=
\frac{1-\uppi_{n\jmath}}{K-1}.
\end{equation}
Both rules are decreasing in $\uppi_{n\jmath}$ ($\jmath=1,\ldots,K$)
and it is straightforward to see that $\mathbf
{t}=K^{-1}\boldsymbol{1}_K$ is the only vectorial downcrossing of the
functions $\boldsymbol{\psi}^{W_1}$ and $\boldsymbol{\psi}^{W_2}$
given by:
\[
\psi^{W_1}_{\jmath} (\mathbf{x} )=\frac{x_\jmath
^{-1} -1}{\sum_{k=1}^K(x_k^{-1} -1) } \quad
\mbox{and}\quad\psi ^{W_2}_{\jmath} (\mathbf{x} )=\frac{1-x_\jmath}{K-1}
\]
and therefore, by Theorem~\ref{thm1bis}, $\lim_{n\rightarrow\infty}
\uppi_{n\jmath} =K^{-1}$ a.s. for any $\jmath=1,\ldots,K$.

Note that, under rule (\ref{weirule2}), $\psi^{W_2}_{\jmath}
(\mathbf{x}  )=\psi^{W_2}_{\jmath} (x_\jmath
)$ (i.e., at each step the allocation probability of each treatment
depends only on the current allocation proportion of that treatment);
in such a case it is sufficient to solve the system of equations $\psi
^{W_2}_{\jmath} (x_\jmath )=x_\jmath$ ($\jmath=1,\ldots,K$).
\end{ex}

\section{Response-Adaptive designs}\label{sec4}

RA rules, which change at each step the allocation probabilities on the
basis of the previous assignments and responses, were originally
introduced as a possible solution to local optimality problems in a
parametric setup, where there exists a desired target allocation
depending on the unknown model parameters \cite{Robb67}. Recently,
they have been also suggested in the context of sequential clinical
trials where ethical purposes are of primary importance, with the aim
of maximizing the power of the test and, simultaneously, skewing the
allocations towards the treatment that appears to be superior (e.g.,
minimizing exposure to the inferior treatment) \cite{Eis94,Ger06,Ros02}.

Suppose that the probability law of the responses under treatments $A$
and $B$ depends on a vector of unknown parameters $\boldsymbol{\gamma
}_A $ and $\boldsymbol{\gamma}_B$, respectively, with $\boldsymbol
{\gamma}^t=(\boldsymbol{\gamma}_A^t,{\boldsymbol{\gamma
}_B^t})\in\Omega$, where $\Omega$ is an open convex subset of
$\mathbb{R} ^{k}$.
Starting with $m$ observations on each treatment, usually assigned by
using restricted randomization, an initial
non-trivial parameter estimation $\widehat{\boldsymbol{\gamma
}}_{2m}$ is derived. Then, at
each step $n\geq2m$ let $\widehat{\boldsymbol{\gamma}}_{n}$
be the estimator of the parameter $\boldsymbol{\gamma}$ based on the
first $n$ observations, which is assumed to be consistent in the i.i.d.
case (i.e., $\lim_{n\rightarrow\infty}\widehat{\boldsymbol{\gamma
}}_{n}= \boldsymbol{\gamma}$ a.s.). Obviously, the speed of
convergence of the allocation proportion is strictly related to the
convergence rate of the chosen estimators; however, their consistency
is sufficient in order to establish the almost sure convergence of
$\uppi_n$.

In this section, we shall deal with RA procedures such that
%
%
\begin{equation}
\label{RAdesigns} \Pr(\delta_{n+1}=1\vert\Im_n)=
\varphi^{\mathrm{RA}} (\uppi _{n} ;\widehat {\boldsymbol{
\gamma}}_{n} ), \qquad\mbox{for } n\geq2m.
\end{equation}
The following definition will help illustrate the asymptotic behaviour
of RA rules and also CARA designs with continuous covariates treated in
Section~\ref{sec5}.

%
\begin{definition}\label{DC2}
Let $\dot{\psi}(x;\mathbf{y})\dvtx [0;1]\times\mathds{R}^d
\rightarrow
[0;1]$. The function $t(\mathbf{y})\dvtx \mathds{R}^d\rightarrow
[0;1] $
is called a \emph{generalized downcrossing} of $\dot{\psi}$ if for
any given $\mathbf{y}\in\mathds{R}^d$ we have
\[
\forall x<t(\mathbf{y}),\qquad\dot{\psi}(x;\mathbf{y}) \geq t(\mathbf {y}) \quad
\mbox{and}\quad\forall x>t(\mathbf{y}),\qquad\dot{\psi }(x;\mathbf {y})\leq t(
\mathbf{y}).
\]
\end{definition}

If the function $\dot{\psi}(x,\mathbf{y})$ is decreasing in $x$,
then the generalized downcrossing $t(\mathbf{y})$ is unique and
$t(\mathbf{y})\neq\{0;1\}$ for any
$\mathbf{y}\in\mathds{R}^d$. Moreover, if there exists a solution of
the equation $\dot{\psi}(x,\mathbf{y})=x$, then $t(\mathbf{y})$
coincides with this solution.

\begin{them}\label{thm2}
Suppose that at each step $n$ the allocation rule $\varphi^{\mathrm
{RA}}
(\uppi_{n} ;\widehat{\boldsymbol{\gamma}}_{n}  )$ is decreasing
in $\uppi_{n}$. If the only generalized downcrossing $t(\widehat
{\boldsymbol{\gamma}}_{n})$ is a continuous function, then
$\lim_{n\rightarrow\infty} \uppi_n=t({\boldsymbol{\gamma}})$ a.s.
\end{them}

\begin{pf}
See Appendix~\ref{A1}.
\end{pf}

%
\begin{ex}
Geraldes \textit{et al.} \cite{Ger06}
introduced the Doubly Adaptive Weighted
Differences Design (DAWD) for binary response trials. Let $\boldsymbol
{\gamma}=(p_A,p_B)^t$ be the vector of the probabilities of success of
$A$ and $B$ and $\widehat{\boldsymbol{\gamma}}_{n}=(\widehat
{p}_{An},\widehat{p}_{Bn})^t$ the corresponding estimate after $n$
steps. When the $(n+1)$st patient is ready to be randomized, the DAWD
allocates him/her to treatment $A$ with probability
%
%
\begin{equation}
\label{dawd} \Pr(\delta_{n+1}=1\vert\Im_n)=\rho
g_1(\widehat{p}_{An}-\widehat {p}_{Bn})+(1-
\rho) g_2 (2 \uppi_n -1 ), \qquad\mbox{for } n\geq2m,
\end{equation}
where $\rho\in[0;1)$ represents an ``ethical weight'' and $g_1,
g_2\dvtx
[-1,1] \rightarrow[0,1]$ are continuous functions s.t.
\begin{enumerate}[(iii)]
\item[(i)] $g_1(0)=g_2(0)=1/2$ and $g_1(1)=g_2(-1)=1$;
\item[(ii)] $g_1(-x)=1-g_1(x)$ and $g_2(-x)=1-g_2(x)$ $\forall x\in[-1;1]$;
\item[(iii)] $g_1(\cdot)$ is non decreasing and $g_2(\cdot)$ is decreasing.
\end{enumerate}
Regarded as a function of $\uppi_{n}$ and $\widehat{\boldsymbol
{\gamma
}}_{n}$, rule (\ref{dawd}) corresponds to
\[
\varphi^{\mathrm{DAWD}} (\uppi_{n} ;\widehat{\boldsymbol {\gamma
}}_{n} ) =\rho g_1\bigl( (1; -1) \widehat{\boldsymbol{
\gamma}}_{n} \bigr)+(1-\rho) g_2 (2 \uppi_n -1
),
\]
which is decreasing in $\uppi_{n}$, so that the equation $\varphi
^{\mathrm{DAWD}}  (\uppi_{n} ;\widehat{\boldsymbol{\gamma
}}_{n}
)=\uppi_{n}$ has a unique solution $t(\widehat{\boldsymbol{\gamma
}}_{n})$, i.e. the generalized downcrossing, which is continuous in
$\widehat{\boldsymbol{\gamma}}_{n}$ (see \cite{Ger06}). Thus
$\lim_{n\rightarrow\infty} \uppi_n=t(\boldsymbol{\gamma})$ a.s.
\end{ex}

Often there is a desired target allocation $\uppi^{\ast}$ to treatment
$A$ that depends on the unknown model parameters, i.e. $\uppi^{\ast
}=\uppi^{\ast}(\boldsymbol{\gamma})$, where $\uppi^{\ast}\dvtx
\Omega
\rightarrow(0;1)$ is a mapping that transforms a $k$-dim vector of
parameters into a scalar one. Thus, Theorem~\ref{thm2} still holds
even if, instead of (\ref{RAdesigns}), we assume
\[
\Pr(\delta_{n+1}=1\vert\Im_n)=\breve{\varphi}^{\mathrm{RA}}
\bigl(\uppi_{n} ;\uppi^{\ast}(\widehat{\boldsymbol{
\gamma}}_{n}) \bigr), \qquad\mbox{for } n\geq2m,
\]
provided that $\uppi^{\ast}(\cdot)$ is a continuous function. In this
case the generalized downcrossing could be more properly denoted by
$t(\widehat{\boldsymbol{\gamma}}_n)=t(\uppi^{\ast}(\widehat
{\boldsymbol{\gamma}}_{n}))$.

%
\begin{ex}
The Doubly-adaptive Biased Coin Design (DBCD) \cite{Eis94,Hu04} is one
of the most effective families of RA procedures aimed at converging to
a desired target $\uppi^{\ast}(\boldsymbol{\gamma})\in(0,1)$ that is
a continuous function of the model parameters. The DBCD assigns
treatment $A$ to the $(n+1)$st subject with probability
%
%
\begin{equation}
\label{dbcd} \Pr(\delta_{n+1}=1\vert\Im_n)=\breve{
\varphi}^{\mathrm
{DBCD}}\bigl(\uppi_n; \uppi ^{\ast}(\widehat{
\boldsymbol{\gamma}}_n)\bigr), \qquad\mbox{for } n\geq2m,
\end{equation}
where the allocation function $\breve{\varphi}$
needs to satisfy the following conditions:
\begin{enumerate}[(iii)]
\item[(i)] $\breve{\varphi}^{\mathrm{DBCD}}(x;y)$ is continuous on
$(0;1)^2$;
\item[(ii)] $\breve{\varphi}^{\mathrm{DBCD}}(x;x)=x$;
\item[(iii)] $\breve{\varphi}^{\mathrm{DBCD}}(x;y)$ is decreasing
in $x$
and increasing in $y$;
\item[(iv)] $\breve{\varphi}^{\mathrm{DBCD}}(x;y)=1-\breve{\varphi
}^{\mathrm{DBCD}}(1-x;1-y)$ for all $x,y\in(0;1)^{2}$.
\end{enumerate}
The {DBCD} forces the allocation proportion to the target
since from conditions (ii) and (iii), when $x>y$ then $\breve{\varphi
}^{\mathrm{DBCD}}(x,y)<y$, whereas
if $x<y$, then $\breve{\varphi}^{\mathrm{DBCD}}(x,y)>y$. However,
condition (i)
is quite restrictive since it does not include several widely-known
proposals based on discontinuous allocation functions, such as Efron's
BCD and its extensions \cite{Hu09}, while condition (iv) simply
guarantees that $A$ and $B$ are treated symmetrically.

Since $\breve{\varphi}^{\mathrm{DBCD}}(x;y)$ is decreasing in $x$
with $\breve
{\varphi}^{\mathrm{DBCD}}(x;x)=x$, then the generalized downcrossing
is unique,
given by
$t(\uppi^{\ast}(\widehat{\boldsymbol{\gamma}}_n))=\uppi^{\ast
}(\widehat{\boldsymbol{\gamma}}_n)$. Thus, from the continuity of
the target $\uppi^{\ast}(\cdot)$ it follows that $\lim_{n\rightarrow
\infty} \uppi_n=\uppi^{\ast}(\boldsymbol{\gamma})$ a.s.
\end{ex}

%
\begin{ex}
In the same spirit of Efron's BCD, Hu, Zhang and He \cite{Hu09} have
recently introduced the ERADE, which is a class of RA procedures based
on discontinuous randomization functions. Let again $\uppi^{\ast
}(\boldsymbol{\gamma})\in(0,1)$ be the desired target, that is
assumed to be a continuous function of the unknown model parameters,
the ERADE assigns treatment $A$ to the $(n+1)$st patient with probability
%
%
\begin{equation}
\label{erade} \Pr(\delta_{n+1}=1\vert\Im_n)= %
\cases{ \alpha\uppi^{\ast}(\widehat{\boldsymbol{\gamma}}_n),
& \quad $\mbox{if } \uppi_n>\uppi^{\ast}(\widehat{
\boldsymbol{\gamma}}_n)$,
\cr
\uppi^{\ast}(\widehat{
\boldsymbol{\gamma}}_n), & \quad$\mbox{if } \uppi _n=
\uppi^{\ast}(\widehat{\boldsymbol{\gamma}}_n)$,
\cr
1- \alpha
\bigl(1-\uppi^{\ast}(\widehat{\boldsymbol{\gamma}}_n)\bigr),
& \quad$\mbox {if } \uppi_n<\uppi^{\ast}(\widehat{
\boldsymbol{\gamma}}_n)$, } %
\end{equation}
where $\alpha\in[0;1)$ governs the degree of randomness. Clearly,
rule (\ref{erade}) corresponds to
\[
\breve{\varphi}^{\mathrm{ERADE}}(x; y)= %
\cases{ \alpha y, & \quad$
\mbox{if } x>y$,
\cr
y, & \quad$\mbox{if } x=y$,
\cr
1- \alpha(1-y), & \quad$
\mbox{if } x<y$, } %
\]
which has a single generalized downcrossing $t(y)=y$; therefore
$\lim_{n\rightarrow\infty} \uppi_n=\uppi^{\ast}(\boldsymbol
{\gamma
})$ a.s.
\end{ex}

%
\begin{rem}
Contrary to the {DBCD} in (\ref{dbcd}) and the ERADE in (\ref{erade}),
from Theorem~\ref{thm2} conditions $\breve{\varphi}^{\mathrm
{RA}}(x;x)=x$ and
$\breve{\varphi}^{\mathrm{RA}}(x;y)=1-\breve{\varphi}^{\mathrm
{RA}}(1-x;1-y)$ are not
requested for guaranteeing the convergence to the chosen target $\uppi
^{\ast}(\boldsymbol{\gamma})$. For instance, if we let
\[
\breve{\varphi}^{\mathrm{RA}} \bigl(\uppi_{n} ;\uppi^{\ast
}(
\widehat {\boldsymbol{\gamma}}_n) \bigr) = %
\cases{
\uppi^{\ast}(\widehat{\boldsymbol{\gamma}}_n)^{\tau},
& \quad$\mbox{if } \uppi_n>\uppi^{\ast}(\widehat{\boldsymbol
{\gamma}}_n)$,
\cr
\uppi^{\ast}(\widehat{\boldsymbol{
\gamma}}_n)^{1/\tau}, & \quad$\mbox{if } \uppi_n
\leq\uppi^{\ast}(\widehat{\boldsymbol {\gamma}}_n)$, }
\]
where the parameter $\tau\geq1$ controls the degree of randomness,
then $\uppi_n\rightarrow\uppi^{\ast}(\boldsymbol{\gamma})$ a.s. as
$n \rightarrow\infty$.
\end{rem}

\section{CARA designs with continuous covariates}\label{sec5}

Since in the actual clinical practice information on patients'
covariates or prognostic factors is usually collected, in some
circumstances it may not be suitable to base the allocation
probabilities only on earlier responses and assignments. This is
particularly true when ethical demands are cogent and the patients have
different profiles that induce heterogeneity in the outcomes.

Starting from the pioneering work of Rosenberger \textit{et al.}
\cite{RosCARA01}, there has been a growing statistical interest
in the topic of CARA randomization procedures. These designs change at
each step the probabilities of allocating treatments by taking into
account all the available
information, namely previous responses, assignments and covariates, as
well as the covariate profile of the current subject, with the aim of
skewing the allocations towards the superior treatment or, in general,
of converging to a desired target allocation depending on the
covariates \cite{Zha07}.

Within this class of procedures, if past outcomes are not taken into
account in the allocation process, then the corresponding class of
rules are called Covariate-Adaptive. The direct application of CA
designs regards clinical trials without ethical demands, where the
experimental aim consists in balancing the assignments of the
treatments across covariates in order to optimize inference \cite{Baz11}.

Due to the fact that the proof scheme for CARA rules with categorical
covariates requires the extension of the concept of downcrossing in a
vectorial framework, which is not used under CARA procedures with
continuous prognostic factors, we will treat these cases separately and
the former will be analyzed in the next section.

From now on, we deal with CARA designs such that
%
%
\begin{equation}
\label{phi general} \Pr(\delta_{n+1}=1\vert\Im_n,
\mathbf{Z}_{n+1}=\mathbf {z}_{n+1})=\varphi^{\mathrm{CARA}}
\bigl(\uppi_{n} ;\widehat {\boldsymbol {\gamma}}_{n},
\mathbf{S}_n,f(\mathbf{z}_{n+1}) \bigr),\qquad n\geq2m,
\end{equation}
where $\Im_n=\sigma(\delta_1,\ldots,\delta_n;Y_1,\ldots
,Y_n;\mathbf{Z}_1,\ldots,\mathbf{Z}_n)$, $f(\cdot)$ is a
known vector function of the covariates of the $(n+1)$st patient
(usually $f$ is the identity function, but it can also incorporate
cross-products to account for interactions among covariates), $\widehat
{\boldsymbol{\gamma}}_{n}$ depends on earlier allocations, covariates
and responses, while $\mathbf{S}_n=\mathbf{S}(\mathbf
{z}_{1},\ldots,\mathbf{z}_{n})$ is a function of the covariates
of the previous patients. In general, it is a vector of sufficient
statistics of the covariate distribution that incorporates the
information on $\mathbf{Z}$ after $n$ steps, and from now on we
always assume that, as $n\rightarrow\infty$,
%
%
\begin{equation}
\label{ipotesicov} \mathbf{S}_n=\mathbf{S}(\mathbf{Z}_{1},
\ldots ,\mathbf{Z}_{n})\rightarrow\boldsymbol{\varsigma} \qquad
\mbox{a.s.}
\end{equation}
Often, $\mathbf{S}_n$ contains the moments up to a given order of
the covariate distribution, and (\ref{ipotesicov}) is satisfied
provided that these moments exist.

\begin{them}\label{thmCARAc}
At each step $n$, suppose that the allocation function $\varphi
^{\mathrm{CARA}}$ in (\ref{phi general}) is decreasing in $\uppi
_{n}$ and let
\[
\tilde{\varphi}_{\mathbf{Z}} (\uppi_{n} ;\widehat {\boldsymbol{
\gamma}}_{n},\mathbf{S}_n )=E_{\mathbf
{Z}_{n+1}} \bigl[
\varphi^{\mathrm{CARA}} \bigl(\uppi_{n} ;\widehat {\boldsymbol {
\gamma}}_{n},\mathbf{S}_n,f(\mathbf{Z}_{n+1})
\bigr) \bigr].
\]
If the only generalized downcrossing $\tilde{t}_{\mathbf
{Z}}(\widehat{\boldsymbol{\gamma}}_{n},\mathbf{S}_n)$ of
$\tilde{\varphi}_{\mathbf{Z}}$ is jointly continuous, then
%
%
\begin{equation}
\label{thmcovcont} \lim_{n\rightarrow\infty} \uppi_n=
\tilde{t}_{\mathbf
{Z}}({\boldsymbol{\gamma}},\boldsymbol{\varsigma}) \qquad
\mbox{a.s.}
\end{equation}
\end{them}

\begin{pf}
See Appendix~\ref{A2}.
\end{pf}

\begin{ex}
Consider the linear homoscedastic model with treatment/covariate
interactions in the following form
\[
E(Y_{i}) =\delta_{i} \mu_{A}+(1-
\delta_{i}) \mu_{B}+ {z}_{i} \bigl[
\delta_{i}\beta_{A}+(1-\delta_{i})
\beta_{B} \bigr],\qquad i\geq1,
\]
where $\mu_{A}$ and $\mu_{B}$ are the baseline treatment effects,
$\beta_{A}\neq\beta_{B}$ are different regression parameters and
$z_{i}$ is a scalar covariate observed on the $i$th individual, which
is assumed to be a standard normal.
Under this model, adopting ``the-larger-the-better'' scenario,
treatment $A$ is the best for patient $(n+1)$ if
$\mu_{A}+z_{n+1} \beta_{A}>\mu_{B}+z_{n+1} \beta_{B}$; thus, if
only ethical aims are taken into account it could be reasonable to
consider the following allocation rule:
%
%
\begin{equation}
\label{ruleetica} \varphi^{\mathrm{ETH}} \bigl(\uppi_{n} ;\widehat{
\boldsymbol {\gamma }}_{n},\mathbf{S}_n,f(
\mathbf{z}_{n+1}) \bigr)= \mathbh{1}_{ \{\hat{\mu}_{An}-\hat{\mu}_{Bn} +z_{n+1}
(\hat{\beta}_{An} -\hat{\beta}_{Bn}  )>0  \}},
\end{equation}
where $\mathbh{1}_{\{\cdot\}}$ is the indicator function and
$\hat{\boldsymbol{\gamma}}_n=(\hat{\mu} _{An},\hat{\mu}
_{Bn},\hat{\beta}_{An},
\hat{\beta}_{Bn})^{t}$ is the least square estimator of $\boldsymbol
{\gamma}=(\mu_{A},\mu_{B},\beta_{A},
\beta_{B})^{t}$ after $n$ steps. Thus,
%
%
\begin{eqnarray}
\label{cont2} &&E_{\mathbf{Z}_{n+1}} \bigl[\varphi^{\mathrm{ETH}} \bigl(
\uppi_{n} ;\widehat{\boldsymbol{\gamma}}_{n},
\mathbf{S}_n,f(\mathbf {Z}_{n+1}) \bigr) \bigr]
\nonumber
\\[-8pt]
\\[-8pt]
&&\quad= \Pr{ \bigl\{\hat{\mu}_{An}-\hat{\mu}_{Bn}
+Z_{n+1} (\hat {\beta}_{An} -\hat{\beta}_{Bn} )>0
\bigr\}} = 1-\Phi \biggl(\frac{\hat{\mu}_{Bn}-\hat{\mu}_{An}}{\vert\hat
{\beta}_{An}-\hat{\beta}_{Bn} \vert} \biggr),
\nonumber
\end{eqnarray}
where $\Phi(\cdot)$ is the cdf of $Z$.
Note that (\ref{cont2}) is constant in $\uppi_{n}$, so it has a single
generalized downcrossing and from Theorem~\ref{thmCARAc},
\[
\lim_{n\rightarrow\infty} \uppi_n=1-\Phi \biggl(
\frac{\mu_{B}-\mu
_{A}}{\vert\beta_{A} -\beta_{B}\vert} \biggr).
\]
Clearly, (\ref{ruleetica}) is a deterministic allocation function that
at each step assigns the treatment that appears to be superior for the
current subject. Excluding degenerate cases, even if both treatments
are explored over the covariate domain (which is due to the random
nature of the covariates), this rule is improper for clinical
applications, since a random component in the assignments is
fundamental and a suitable compromise between ethical demands and
inferential efficiency is usually needed. This dilemma, usually known
in the clinical literature as ``individual versus
collective ethics'' \cite{Bag2010}, corresponds to the trade-off
between ``exploitation'' and ``exploration'' of the Bandits
literature \cite{Auer02,Git79}. Although
Adaptive randomization \cite{Ros01} and Bandits methodology are very
different approaches, since under the latter a deterministic policy
(i.e., a sequence of allocations) is usually selected in a finite time
horizon in order to maximize a total expected reward over all the
possible sequences (often made in a Bayesian setting), similar
conclusions as those of the present example have been recently
developed by Pavlidis \textit{et al.} \cite{Pav08} in the case of Multi-Armed
Bandits with linear reward in the presence of covariates.
\end{ex}

%
\begin{ex}
As in the case of RA procedures, also for CARA rules there is often a
desired target allocation $\uppi^{\ast}$ to treatment $A$ that is a
function of the unknown model parameters and the covariates, that is,
$\uppi^{\ast}=\uppi^{\ast}(\boldsymbol{\gamma},\mathbf{z})$,
which is assumed to be continuous in ${\boldsymbol{\gamma}}$ for any
fixed covariate level $\mathbf{z}$.
In particular, Zhang \textit{et al.} \cite{Zha07} assumed a
generalized linear
model setup and suggested to allocate subject $(n+1)$ to $A$ with probability
%
%
\begin{equation}
\label{CARAhu07} \Pr(\delta_{n+1}=1\vert\Im_n,
\mathbf{Z}_{n+1}=\mathbf {z}_{n+1})=\uppi^{\ast}(
\widehat{\boldsymbol{\gamma}}_n, \mathbf {z}_{n+1} ), \qquad
\mbox{for } n\geq2m,
\end{equation}
which represents an analog of the Sequential Maximum Likelihood design
\cite{Mel01} in the presence of
covariates. Assuming that the target function $\uppi^{\ast}$ is differentiable
in $\boldsymbol{\gamma}$, under the expectation, with bounded
derivatives, the authors showed that
$\lim_{n\rightarrow\infty}\uppi_{n}=E_{\mathbf{Z}}[\uppi^{\ast
}(\boldsymbol{\gamma}, \mathbf{Z} )]$ a.s.

Clearly, allocation rule (\ref{CARAhu07}) is constant in $\uppi_{n}$
and therefore
$\tilde{\varphi}_{\mathbf{Z}}  (\uppi_{n} ;\widehat
{\boldsymbol{\gamma}}_{n},\mathbf{S}_n )=E_{\mathbf
{Z}_{n+1}} [\uppi^{\ast}(\widehat{\boldsymbol{\gamma}}_n,\allowbreak
\mathbf{Z}_{n+1} ) ]$ is also constant in $\uppi_{n}$. Thus,
the generalized downcrossing of $\tilde{\varphi}_{\mathbf{Z}}$
is unique and obviously
$\lim_{n\rightarrow\infty} \uppi_n= E_{\mathbf{Z}} [\uppi
^{\ast}(\boldsymbol{\gamma}, \mathbf{Z} ) ]$ a.s.
\end{ex}

%
\begin{rem}
Some authors (see for instance \cite{Ban01}) suggested CARA designs
that incorporate covariate information
in the randomization process, but ignoring the covariate of the current
subject. Note that these methods can be regarded as special cases of
$\varphi^{\mathrm{CARA}}$ in (\ref{phi general}) and therefore
Theorem~\ref
{thmCARAc} can still be applied by taking into account the generalized
downcrossing of $\varphi^{\mathrm{CARA}}$ directly.
\end{rem}

Even if Theorem~\ref{thmCARAc} proves the convergence of CARA designs
in the case of continuous covariates, it could be difficult to obtain
an analytical expression for $\tilde{\varphi}_{\mathbf{Z}}$ and
therefore to find the corresponding generalized downcrossing.
Nevertheless, the following lemma allows to obtain the generalized
downcrossing in a simple manner in some circumstances.

%
\begin{lemma}\label{lemma1}
Let $\varphi^{\mathrm{CARA}}  (\uppi_{n} ;\widehat{\boldsymbol
{\gamma
}}_{n},\mathbf{S}_n,f(\mathbf{z}_{n+1})  )$ be jointly
continuous and, assuming that  $\varphi^{\mathrm{CARA}} (x
;\boldsymbol
{\gamma},\boldsymbol{\varsigma},f(\mathbf{Z}) )$ is
decreasing in $x$, let $t^{*}_{\mathbf{Z}}(\boldsymbol{\gamma
},\boldsymbol{\varsigma})$ be the unique solution of equation
\[
\varphi^{\mathrm{CARA}} \bigl(x ;\boldsymbol{\gamma},\boldsymbol {\varsigma
},E_{\mathbf{Z}}\bigl[f(\mathbf{Z})\bigr] \bigr)=x.
\]
If $\varphi^{\mathrm{CARA}}  (t_{\mathbf{Z}}^{*}(\boldsymbol
{\gamma
}, \boldsymbol{\varsigma}) ;\boldsymbol{\gamma},\boldsymbol
{\varsigma},f(\mathbf{Z})  )$ is linear in $f(\mathbf
{Z})$ and $t^{*}_{\mathbf{Z}}$ is jointly continuous,
then (\ref{thmcovcont}) still holds with $\tilde{t}_{\mathbf
{Z}}(\boldsymbol{\gamma}, \boldsymbol{\varsigma})=t_{\mathbf
{Z}}^{*}(\boldsymbol{\gamma}, \boldsymbol{\varsigma})$.
\end{lemma}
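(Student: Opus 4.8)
The plan is to deduce the statement from Theorem~\ref{thmCARAc} by showing that, under the stated hypotheses, the generalized downcrossing $\tilde{t}_{\boldsymbol{Z}}$ of the averaged allocation function $\tilde{\varphi}_{\boldsymbol{Z}}$ coincides with $t^{*}_{\boldsymbol{Z}}$. First I would check that $\tilde{\varphi}_{\boldsymbol{Z}}$ inherits monotonicity in its first argument: since $\varphi^{CARA}\left(x;\boldsymbol{\gamma},\boldsymbol{\varsigma},f(\boldsymbol{z})\right)$ is decreasing in $x$ for every covariate value $\boldsymbol{z}$, its $\boldsymbol{Z}$-expectation $\tilde{\varphi}_{\boldsymbol{Z}}\left(x;\boldsymbol{\gamma},\boldsymbol{\varsigma}\right)=E_{\boldsymbol{Z}}[\varphi^{CARA}(x;\boldsymbol{\gamma},\boldsymbol{\varsigma},f(\boldsymbol{Z}))]$ is also decreasing in $x$; hence, by the remark following Definition~\ref{DC2}, $\tilde{\varphi}_{\boldsymbol{Z}}$ has a unique generalized downcrossing $\tilde{t}_{\boldsymbol{Z}}(\boldsymbol{\gamma},\boldsymbol{\varsigma})\in(0;1)$, characterized as the unique solution of $\tilde{\varphi}_{\boldsymbol{Z}}(x;\boldsymbol{\gamma},\boldsymbol{\varsigma})=x$.

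The core step is the identification of this fixed point. I would evaluate $\tilde{\varphi}_{\boldsymbol{Z}}$ at $x=t^{*}_{\boldsymbol{Z}}(\boldsymbol{\gamma},\boldsymbol{\varsigma})$ and use the linearity (affinity) of $f(\boldsymbol{Z})\mapsto\varphi^{CARA}\left(t^{*}_{\boldsymbol{Z}}(\boldsymbol{\gamma},\boldsymbol{\varsigma});\boldsymbol{\gamma},\boldsymbol{\varsigma},f(\boldsymbol{Z})\right)$ to pull the expectation inside the allocation function, so that (using integrability of $f(\boldsymbol{Z})$, guaranteed by the moment assumptions underlying (\ref{ipotesicov}))
\begin{equation*}
\tilde{\varphi}_{\boldsymbol{Z}}\left(t^{*}_{\boldsymbol{Z}}(\boldsymbol{\gamma},\boldsymbol{\varsigma});\boldsymbol{\gamma},\boldsymbol{\varsigma}\right)=E_{\boldsymbol{Z}}\left[\varphi^{CARA}\left(t^{*}_{\boldsymbol{Z}}(\boldsymbol{\gamma},\boldsymbol{\varsigma});\boldsymbol{\gamma},\boldsymbol{\varsigma},f(\boldsymbol{Z})\right)\right]=\varphi^{CARA}\left(t^{*}_{\boldsymbol{Z}}(\boldsymbol{\gamma},\boldsymbol{\varsigma});\boldsymbol{\gamma},\boldsymbol{\varsigma},E_{\boldsymbol{Z}}[f(\boldsymbol{Z})]\right).
\end{equation*}
By the very definition of $t^{*}_{\boldsymbol{Z}}(\boldsymbol{\gamma},\boldsymbol{\varsigma})$ as the solution of $\varphi^{CARA}(x;\boldsymbol{\gamma},\boldsymbol{\varsigma},E_{\boldsymbol{Z}}[f(\boldsymbol{Z})])=x$, the right-hand side equals $t^{*}_{\boldsymbol{Z}}(\boldsymbol{\gamma},\boldsymbol{\varsigma})$; thus $t^{*}_{\boldsymbol{Z}}(\boldsymbol{\gamma},\boldsymbol{\varsigma})$ solves $\tilde{\varphi}_{\boldsymbol{Z}}(x;\boldsymbol{\gamma},\boldsymbol{\varsigma})=x$ and, by the uniqueness established above, $\tilde{t}_{\boldsymbol{Z}}(\boldsymbol{\gamma},\boldsymbol{\varsigma})=t^{*}_{\boldsymbol{Z}}(\boldsymbol{\gamma},\boldsymbol{\varsigma})$. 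Running the same computation at nearby parameter values shows that $\tilde{t}_{\boldsymbol{Z}}$ and $t^{*}_{\boldsymbol{Z}}$ agree as functions on a neighbourhood of $(\boldsymbol{\gamma},\boldsymbol{\varsigma})$, so the joint continuity of $t^{*}_{\boldsymbol{Z}}$ assumed in the statement transfers to $\tilde{t}_{\boldsymbol{Z}}$; Theorem~\ref{thmCARAc} then applies and yields $\lim_{n\to\infty}\pi_n=\tilde{t}_{\boldsymbol{Z}}(\boldsymbol{\gamma},\boldsymbol{\varsigma})=t^{*}_{\boldsymbol{Z}}(\boldsymbol{\gamma},\boldsymbol{\varsigma})$ a.s., which is (\ref{thmcovcont}).

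The only genuine subtlety — and what I would be most careful about — is the exchange of $E_{\boldsymbol{Z}}$ with the map $\varphi^{CARA}(t^{*}_{\boldsymbol{Z}}(\boldsymbol{\gamma},\boldsymbol{\varsigma});\boldsymbol{\gamma},\boldsymbol{\varsigma},\cdot)$: this is precisely where the linearity-in-$f(\boldsymbol{Z})$ hypothesis is used, and one must make sure $f(\boldsymbol{Z})$ is integrable, otherwise $E_{\boldsymbol{Z}}[f(\boldsymbol{Z})]$ and the defining equation for $t^{*}_{\boldsymbol{Z}}$ are meaningless. Everything else is a bookkeeping reduction to Theorem~\ref{thmCARAc}; in particular, that theorem also requires $\varphi^{CARA}$ to be decreasing in $\pi_n$ at every step (so that the per-step downcrossings are well defined), which is part of the standing set-up here and is reflected in the limiting decreasingness assumed in the Lemma, so no new difficulty arises on that front.
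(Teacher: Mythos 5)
Your proposal is correct and follows essentially the same route as the paper: both hinge on the single key computation $\tilde{\varphi}_{\boldsymbol{Z}}\bigl(t^{*}_{\boldsymbol{Z}};\boldsymbol{\gamma},\boldsymbol{\varsigma}\bigr)=\varphi^{CARA}\bigl(t^{*}_{\boldsymbol{Z}};\boldsymbol{\gamma},\boldsymbol{\varsigma},E_{\boldsymbol{Z}}[f(\boldsymbol{Z})]\bigr)=t^{*}_{\boldsymbol{Z}}$, justified by linearity, together with monotonicity and continuity of $\tilde{\varphi}_{\boldsymbol{Z}}$. The only cosmetic difference is that you identify $t^{*}_{\boldsymbol{Z}}$ directly as the unique fixed point, whereas the paper phrases the same argument as a proof by contradiction on the two cases $\tilde{t}_{\boldsymbol{Z}}<t^{*}_{\boldsymbol{Z}}$ and $\tilde{t}_{\boldsymbol{Z}}>t^{*}_{\boldsymbol{Z}}$.
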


\begin{pf}
Assume that $\tilde{t}_{\mathbf{Z}}(\boldsymbol{\gamma}
,\boldsymbol{\varsigma})< t_{\mathbf{Z}}^{*}(\boldsymbol{\gamma
}, \boldsymbol{\varsigma})$. From the properties of $\varphi
^{\mathrm{CARA}}$, the function $\tilde{\varphi}_{\mathbf{Z}}  (x
;\boldsymbol{\gamma}, \boldsymbol{\varsigma} )$ is jointly
continuous and decreasing in $x$, so that
$\tilde{t}_{\mathbf{Z}}(\boldsymbol{\gamma}, \boldsymbol
{\varsigma})=\tilde{\varphi}_{\mathbf{Z}}  (\tilde
{t}_{\mathbf{Z}}(\boldsymbol{\gamma}, \boldsymbol{\varsigma})
;\boldsymbol{\gamma},\boldsymbol{\varsigma} )> \tilde
{\varphi}_{\mathbf{Z}}  (t_{\mathbf{Z}}^*(\boldsymbol
{\gamma}, \boldsymbol{\varsigma}) ;\allowbreak  \boldsymbol{\gamma},\boldsymbol
{\varsigma} )$.
However,
\[
\tilde{\varphi}_{\mathbf{Z}} \bigl(t_{\mathbf
{Z}}^*(\boldsymbol{\gamma},
\boldsymbol{\varsigma}) ;\boldsymbol {\gamma},\boldsymbol{\varsigma} \bigr)=
\varphi^{\mathrm{CARA}} \bigl(t_{\mathbf{Z}}^{*}(\boldsymbol {\gamma}
,\boldsymbol{\varsigma}) ;\boldsymbol{\gamma},\boldsymbol {\varsigma},E_{\mathbf{Z}}
\bigl[f(\mathbf{Z})\bigr] \bigr)=t_{\mathbf{Z}}^*(\boldsymbol{\gamma},
\boldsymbol{\varsigma}),
\]
since $\varphi^{\mathrm{CARA}}  (t_{\mathbf{Z}}^{*}(\boldsymbol
{\gamma}, \boldsymbol{\varsigma}) ;\boldsymbol{\gamma},\boldsymbol
{\varsigma},f(\mathbf{Z})  )$ is linear in $f(\mathbf
{Z})$, contradicting the assumption. Analogously if we assume $\tilde
{t}_{\mathbf{Z}}(\boldsymbol{\gamma},\boldsymbol{\varsigma})>
t_{\mathbf{Z}}^{*}(\boldsymbol{\gamma},\boldsymbol{\varsigma})$.
\end{pf}

%
\begin{ex}
The Covariate-adjusted Doubly-adaptive Biased Coin Design introduced by
Zhang and Hu \cite{Zha09}
is a class of CARA procedures intended to converge
to a desired target $\uppi^*(\boldsymbol{\gamma},\mathbf{z})$.
When the $(n+1)$st subject with covariate $\mathbf
{Z}_{n+1}=\mathbf{z}_{n+1}$ is ready to be randomized, he/she will
be assigned to $A$ with probability
%
%
\begin{eqnarray}
\label{cinesacci} %
&&\Pr(\delta_{n+1}= 1\vert\Im_n,
\mathbf{Z}_{n+1}=\mathbf {z}_{n+1})
\nonumber
\\[-8pt]
\\[-8pt]
&&\quad=\frac{\uppi^*(\widehat{\boldsymbol{\gamma}}_n, \mathbf
{z}_{n+1} )  ( {\widehat{\rho}_n}/{\uppi_n}  )^\nu
}{\uppi^*(\widehat{\boldsymbol{\gamma}}_n, \mathbf{z}_{n+1}
) ( {\widehat{\rho}_n}/{\uppi_n}  )^\nu+ [1-\uppi
^*(\widehat{\boldsymbol{\gamma}}_n, \mathbf{z}_{n+1} )]  (
{(1-\widehat{\rho}_n)}/{(1-\uppi_n)}  )^\nu},
\nonumber
\end{eqnarray}
where $\widehat{\rho}_n=n^{-1} \sum_{i=1}^{n}\uppi^*(\widehat
{\boldsymbol{\gamma}}_n,\mathbf{z}_i)$. Assuming that
%
%
\begin{equation}
\label{hpcin} \Pr(\delta_{n+1}=1\vert\Im_n,
\mathbf{Z}_{n+1}=\mathbf{z}) \rightarrow\uppi^*(\boldsymbol{\gamma},
\mathbf{z}) \qquad\mbox{a.s.}
\end{equation}
the authors proved that $\lim_{n\rightarrow\infty} \uppi
_n=E_{\mathbf{Z}} [\uppi^*(\boldsymbol{\gamma},\mathbf
{Z}) ]$ a.s.

Note that rule (\ref{cinesacci}) can be regarded as special case of
$\varphi^{\mathrm{CARA}}$ after the transformation $(\widehat
{\boldsymbol
{\gamma}}_{n},\mathbf{S}_n,f(\mathbf{z}_{n+1}) )\mapsto
(\widehat{\rho}_n,\uppi^*(\widehat{\boldsymbol{\gamma}}_n,
\mathbf{z}_{n+1} ))$
and thus, even if we remove condition (\ref{hpcin}), Lemma~\ref
{lemma1} can be applied to the allocation function
\[
\breve{\varphi}^{\mathrm{ZH}} (x;a,b )= \biggl\{ 1+ \frac
{1-b }{b} \biggl[
\frac{(1-a) x}{a(1-x)} \biggr]^{\nu} \biggr\}^{-1},
\]
which is decreasing in $x$ and continuous in all the arguments.
Indeed, since both $\widehat{\rho}_n$ and $E_{\mathbf
{Z}_{n+1}} [\uppi^*(\widehat{\boldsymbol{\gamma}}_n, \mathbf
{Z}_{n+1} ) ]$ converge to $E_{\mathbf{Z}} [\uppi
^*(\boldsymbol{\gamma},\mathbf{Z}) ]$ a.s., the solution
of the equation  $\breve{\varphi}^{\mathrm{ZH}}(x;   E_{\mathbf
{Z}} [\uppi
^*(\boldsymbol{\gamma},\mathbf{Z}) ],E_{\mathbf
{Z}} [\uppi^*(\boldsymbol{\gamma},\mathbf{Z}) ])
=x$ is $t^{*}_{\mathbf{Z}}= E_{\mathbf{Z}} [\uppi
^*(\boldsymbol{\gamma},\mathbf{Z}) ]$. Furthermore, since
$\breve{\varphi}^{\mathrm{ZH}} (E_{\mathbf{Z}} [\uppi
^*(\boldsymbol{\gamma},\mathbf{Z}) ];\allowbreak  E_{\mathbf
{Z}} [\uppi^*(\boldsymbol{\gamma},\mathbf{Z}) ],
\uppi^*(\boldsymbol{\gamma}, \mathbf{Z}) )= \uppi
^*(\boldsymbol{\gamma}, \mathbf{Z})$,
then $\lim_{n\rightarrow\infty} \uppi_n=E_{\mathbf{Z}} [\uppi
^*(\boldsymbol{\gamma},\mathbf{Z}) ]$ a.s.
\end{ex}

%
\begin{rem}
Theorem~\ref{thmCARAc} and Lemma~\ref{lemma1} can be naturally
applied to CA designs in the presence of continuous covariates by
considering, instead of (\ref{phi general}), the following class of
allocation rules:
\[
\Pr(\delta_{n+1}=1\vert\Im_n, \mathbf{Z}_{n+1}=
\mathbf {z}_{n+1})=\varphi^{\mathrm{CA}} \bigl(\uppi_{n} ;
\mathbf {S}_n,f(\mathbf{z}_{n+1}) \bigr),
\]
with $\Im_n=\sigma(\delta_1,\ldots,\delta_n;\mathbf
{Z}_ {1},\ldots,\mathbf{Z}_{ {n}})$. Clearly, $\tilde{t}_{\mathbf
{Z}}({\boldsymbol{\gamma}},\boldsymbol{\varsigma})$ and
$t^{*}_{\mathbf{Z}}(\boldsymbol{\gamma},\boldsymbol{\varsigma
})$ should be
replaced by $\tilde{t}_{\mathbf{Z}}(\boldsymbol{\varsigma})$
and $t^{*}_{\mathbf{Z}}(\boldsymbol{\varsigma})$, respectively.
\end{rem}

\section{CARA designs with categorical covariates}\label{sec6}

We now provide a convergence result for CARA designs in the case of
categorical covariates. In order to avoid cumbersome
notation, from now on we assume without loss of generality two categorical
covariates, i.e. $\mathbf{Z}=(T,W)$, with levels $t_j$
($j=0,\ldots,J$) and $w_l$ ($l=0,\ldots,L$), respectively.
Also, let $\mathbf{p} = [p_{jl}\dvt j =0,\ldots,J; l = 0, \ldots
,L]$ be the joint probability distribution of the categorical
covariates, with $p_{jl}>0$ for any $j=0,\ldots,J$ and $l=0,\ldots,L$
and $\sum_{j=0}^J\sum_{l=0}^L p_{jl}=1$.

After $n$ steps, let $N_n(j,l)=\sum_{i=1}^{n}\mathbh{1}_{\{
Z_i=(t_j,w_l)\}}$ be the number of subjects within the
stratum $(t_j,w_l)$, $\widetilde{N}_n(j,l)=\sum_{i=1}^{n}\delta_i
\mathbh{1}_{\{Z_i=(t_j,w_l)\}}$ the number of allocations to $A$
within this
stratum and $\uppi_n(j,l)$
the corresponding proportion, that is, $\uppi
_n(j,l)=N_n(j,l)^{-1}\widetilde{N}_n(j,l)$, for any $j=0,\ldots,J$
and $l=0,\ldots,L$. Also, let $\boldsymbol{\uppi}_n= [\uppi
_n(j,l)\dvt j =0,\ldots,J; l = 0, \ldots,L ]$.

After an initial stage with $m$ observations on each treatment,
performed to derive a non-trivial parameter estimation, we consider a
class of CARA designs that assigns the $(n+1)$st patient with covariate
profile $\mathbf{Z}_{n+1}=(t_j,w_l)$ to $A$ with probability
%
%
\begin{equation}
\label{randcaracat} \Pr\bigl(\delta_{n+1}=1\vert\Im_n,
\mathbf{Z}_{n+1}=(t_j,w_l)\bigr)=\varphi
_{jl} (\boldsymbol{\uppi}_n ;\widehat {\boldsymbol{
\gamma}}_{n},\mathbf{S}_{n} ), \qquad\mbox {for } n
\geq2m,
\end{equation}
where $\Im_n=\sigma(\delta_1,\ldots,\delta_n;Y_1,\ldots
,Y_n;\mathbf{Z}_1,\ldots,\mathbf{Z}_n)$ and $\varphi_ {jl}$
is the allocation function of the stratum $(t_j,w_l)$.

Let $\boldsymbol{\varphi}(\boldsymbol{\uppi}_n;\widehat
{\boldsymbol
{\gamma}}_{n},\mathbf{S}_{n} )= [\varphi_{jl}(\boldsymbol{\uppi
}_n;\widehat{\boldsymbol{\gamma}}_{n},\mathbf{S}_{n} )\dvt j
=0,\ldots,J; l = 0, \ldots,L]$,
often the allocation rule at each stratum does not depend on the entire
vector of allocation proportions $\boldsymbol{\uppi}_n$ involving all
the strata, but depends only on the current allocation proportion of
this stratum, that is,
%
%
\begin{equation}
\label{stratrand} \varphi_{jl}(\boldsymbol{\uppi}_n;
\widehat{\boldsymbol{\gamma }}_{n},\mathbf{S}_{n})=
\varphi_{jl}\bigl(\uppi_n(j,l);\widehat {\boldsymbol{
\gamma}}_{n}, \mathbf{S}_{n}\bigr), \qquad\forall j =0,
\ldots ,J; l = 0, \ldots,L.
\end{equation}
However, note that (\ref{stratrand}) does not correspond in general to
a stratified randomization, due to the fact that the estimate $\widehat
{\boldsymbol{\gamma}}_{n}$ usually involves the information accrued
from all the strata up to that step, and thus the evolutions of the
procedure at different strata are not independent.

\begin{definition}\label{DC3}
Let $\mathbf{x}= [x_{1},\ldots,x_{\mathcal{K}} ]$, where
$x_{\iota}\in[0;1]$ for any $\iota=1,\ldots,\mathcal{K}$ and
$\mathcal{K}$ is a positive integer. Also, let $\ddot{\psi}_{\iota
}(\mathbf{x};\mathbf{y})\dvtx [0;1]^{\mathcal{K}}\times\mathds{R}^d
\rightarrow[0;1]$ and set $\ddot{\boldsymbol{\psi}}(\mathbf
{x};\mathbf{y})= [\ddot{\psi}_{1}(\mathbf{x};\mathbf
{y}),\ldots,\ddot{\psi}_{\mathcal{K}}(\mathbf{x};\mathbf
{y}) ]$.
Then
$\mathbf{t}(\mathbf{y})=  [t_{1}(\mathbf{y}),\ldots
,t_{\mathcal{K}}(\mathbf{y}) ]$, with $t_{\iota}(\mathbf
{y})\dvtx
\mathds{R}^d \rightarrow[0;1] $ for $\iota=1,\ldots,\mathcal{K}$,
is called a \emph{vectorial} \emph{generalized} \emph{downcrossing}
of $\ddot{\boldsymbol{\psi}}$ if for all $\mathbf{y}\in\mathds
{R}^d$ and for any $\iota=1,\ldots,\mathcal{K}$
\[
\mbox{for all } x_{\iota}<t_{\iota}(\mathbf{y}), \qquad\ddot{\psi
}_{\iota}(\mathbf{x};\mathbf{y}) \geq t_{\iota}(\mathbf{y}) \quad
\mbox {and}\quad\mbox{for all } x_{\iota}>t_{\iota}(\mathbf{y}),
\qquad \ddot {\psi}_{\iota}(\mathbf{x};\mathbf{y}) \leq t_{\iota}(
\mathbf{y}).
\]
\end{definition}

Clearly, if the function $\ddot{\psi}_{\iota}(\mathbf{x};\mathbf
{y})$ is decreasing in $\mathbf{x}$ (i.e., componentwise) for any
$\iota$,
then the vectorial generalized downcrossing $\mathbf{t}(\mathbf
{y})$ is unique, with $\mathbf{t}(\mathbf{y})\in(0;1)^\mathcal{K}$
for any $\mathbf{y}\in\mathds{R}^d$; furthermore $\ddot{\boldsymbol
{\psi}}(\mathbf{t}(\mathbf{y});\mathbf{y})=\mathbf{t}(\mathbf{y})$, provided that the solution exists.
Moreover, note that if $\ddot{\psi}_{\iota}(\mathbf{x};\mathbf
{y})=\ddot{\psi}_{\iota}(x_{\iota};\mathbf{y})$ for any $\iota
=1,\ldots,\mathcal{K}$, then each
component $t_{\iota}(\mathbf{y})$ of $\mathbf{t}(\mathbf{y})$
is simply the single generalized downcrossing of $\ddot{\psi}_{\iota
}(x_\iota;\mathbf{y})$, which
can be found by solving the equation $\ddot{\psi}_{\iota}(x;\mathbf
{y})=x$ (if the solution exists).

\begin{them}\label{thm4}
At each step $n$, suppose that for any given stratum $(t_j,w_l)$ the
allocation function
$\varphi_{jl} (\boldsymbol{\uppi}_n ;\widehat{\boldsymbol
{\gamma}}_{n},\mathbf{S}_{n}  )$ is decreasing in
$\boldsymbol{\uppi}_n$ (componentwise).
If the unique vectorial generalized downcrossing $\mathbf{t}
(\widehat{\boldsymbol{\gamma}}_{n},\mathbf{S}_{n}  )=
[t_{jl}(\widehat{\boldsymbol{\gamma}}_{n},\mathbf{S}_{n}) \dvt j
=0,\ldots,J; l = 0, \ldots,L]$
is a continuous function and $\boldsymbol{\varphi}(\mathbf
{t} (\boldsymbol{\gamma},\boldsymbol{\varsigma}
);\boldsymbol{\gamma},\boldsymbol{\varsigma})=\mathbf{t}
(\boldsymbol{\gamma},\boldsymbol{\varsigma}  )$, then
\[
\lim_{n\rightarrow\infty} \boldsymbol{\uppi}_n=\mathbf{t} (
\boldsymbol{\gamma},\boldsymbol{\varsigma} ) \quad\mbox {and}\quad \lim
_{n\rightarrow\infty} \uppi_n=E_{\mathbf{Z} }\bigl[\mathbf
{t} (\boldsymbol{\gamma},\boldsymbol{\varsigma} )\bigr]=\sum
_{j=0}^J\sum_{l=0}^Lt_{jl}(
\boldsymbol{\gamma},\boldsymbol {\varsigma}) p_{jl} \qquad\mbox{a.s.}
\]
\end{them}

\begin{pf}
See Appendix~\ref{A3}.
\end{pf}

%
\begin{ex}\label{exRDBCD}
The Reinforced Doubly-adaptive Biased Coin Design (RDBCD) is a class of
CARA procedures recently introduced by
Baldi Antognini and Zagoraiou
\cite{Baz12} in the case of
categorical covariates intended to target any desired allocation proportion
\[
\boldsymbol{\uppi}^{\ast}(\boldsymbol{\gamma})=\bigl[
\uppi^{\ast
}(j,l)\dvt j=0,\ldots, J; l=0,\ldots, L\bigr]\dvtx\Omega
\rightarrow (0,1)^{(J+1)\times(L+1)},
\]
which is a continuous function of the unknown model parameters.
Starting with a pilot stage performed to
derive an initial parameter estimation, at each step $n\geq2m$ let
$\widehat{\uppi}_{n}^{\ast}(j,l)$ be the estimate of the target within
stratum $(t_j,w_l)$ obtained using all the collected data up to that
step and $\widehat{p}_{jln}=n^{-1}N_n(j,l)$ the estimate of $p_{jl}$;
when the next patient
with covariate $\mathbf{Z}_{n+1}=(t_j,w_l)$ is ready to be
randomized, the RDBCD assigns him/her
to $A$ with probability
\[
\Pr\bigl(\delta_{n+1}=1\vert\Im_n, \mathbf{Z}_{n+1}=(t_j,w_l)
\bigr)=\varphi _{jl} \bigl(\uppi_{n}(j,l);\widehat{\uppi
}_{n}^{\ast}(j,l),\widehat{p}_{jln} \bigr),
\]
where the function $\varphi_{jl}(x;y,z)\dvtx(0,1)^3 \rightarrow[0,1]$
satisfies the following conditions:
\begin{enumerate}[(iii)]
\item[(i)] $\varphi_{jl}$ is decreasing in $x$ and increasing in $y$,
for any $ z \in(0,1)$;
\item[(ii)] $\varphi_{jl}(x;x,z)=x$ for any $z \in(0,1)$;
\item[(iii)] $\varphi_{jl}$ is decreasing in $z$ if $x<y$, and
increasing in $z$ if $x>y$;
\item[(iv)] $\varphi_{jl}(x;y,z)=1-\varphi_{jl}(1-x;1-y,z)$ for any $
z \in(0,1)$.
\end{enumerate}
First, observe that for the RDBCD (\ref{stratrand}) holds and thus,
from (i) and (ii), at each stratum $(t_j,w_l)$ the only generalized
downcrossing of $\varphi_{jl}$ is simply given by $\widehat{\uppi
}_{n}^{\ast}(j,l)$. Therefore, by Theorem~\ref{thm4}, $\lim_{n\rightarrow\infty} \uppi_n(j,l) ={\uppi}^*(j,l)$ a.s. for any
$j=0,\ldots, J$ and $l=0,\ldots, L$, due to the continuity of the
target, that is, $\lim_{n\rightarrow\infty} \boldsymbol{\uppi
}_n=\boldsymbol{\uppi}^*(\boldsymbol{\gamma})$ a.s.
\end{ex}

\subsection{Covariate-Adaptive designs with categorical covariates}

Theorem~\ref{thm4} can be naturally applied to CA procedures in the
case of categorical covariates by assuming, instead of (\ref
{randcaracat}), the following class of allocation rules:
%
%
\begin{equation}
\label{randcaracat2} \Pr(\delta_{n+1}=1\vert\Im_n,
\mathbf{Z}_{n+1}=\mathbf{z}_{n+1})=\varphi_{jl}
(\boldsymbol{\uppi}_n ;\mathbf{S}_{n} ),
\end{equation}
where now $\Im_n=\sigma(\delta_1,\ldots,\delta_n;\mathbf{Z}_{ {1}},
\ldots,\mathbf{Z}_{ {n}})$.
Moreover, from now on we let $\mathbf{t}^{B}= [1/2 \dvt j
=0,\ldots
,J; l = 0, \ldots,L]$.

\begin{ex}\label{exCABCD}
The Covariate-Adaptive Biased Coin Design (C-ABCD) \cite{Baz11} is a
class of stratified randomization procedures intended to achieve joint
balance. For any stratum $(t_j,w_l)$, let $F_{jl}(\cdot)\dvtx \mathds
{R}\rightarrow[0,1]$ be a non-increasing and symmetric function with
$F_{jl}(-x)=1-F_{jl}(x)$; the C-ABCD assigns the $(n+1)$st patient with
profile $\mathbf{Z}_{n+1}=(t_{j},w_{l})$ to $A$ with probability
%
%
\begin{equation}
\label{CABCD} \Pr \bigl(\delta_{n+1}=1\vert\Im_n,
\mathbf{Z}_{n+1}=(t_j,w_l) \bigr)
=F_{jl}\bigl[D_n(j,l)\bigr],
\end{equation}
where $D_n(j,l)=N_n(j,l) [2\uppi_n(j,l)-1 ]$ is the imbalance
between the two groups after $n$ steps within stratum $(t_j,w_l)$.
As showed in Remark~\ref{remABCD} and Example~\ref{exABCD} in the
case of {AA} procedures, Theorem~\ref{thm4}
still holds even if we assume different randomization functions at each step,
provided that the unique vectorial generalized downcrossing is the same
for any $n$. Indeed, it is trivial to see that rule (\ref{CABCD})
corresponds to
\[
\varphi_{jln} (\boldsymbol{\uppi}_n ;
\mathbf{S}_{n} )=\varphi_{jln} \bigl(
\uppi_n(j,l) ;\mathbf{S}_{n} \bigr)=F_{jl}
\bigl\{n \bigl[2\uppi_n(j,l)-1 \bigr]\widehat{p}_{jln}
\bigr\},
\]
and, from the properties of $F_{jl}$, $\varphi_{jln}$'s have $1/2$ as
unique downcrossing for any $n$; thus $\lim_{n\rightarrow\infty}
\boldsymbol{\uppi}_n=\mathbf{t}^{B}$, which clearly implies
marginal balance.

Moreover, when the covariate distribution is known
Baldi Antognini and Zagoraiou
\cite{Baz11}
suggested the following class of randomization rules:
\[
F_{jl}^q(x)= \bigl\{x^{q(p_{jl})}+1\bigr
\}^{-1},\qquad x\geq1,
\]
where $q(\cdot)$ is a decreasing function with $\lim_{t\rightarrow
0^+} q(t)=\infty$. Clearly, the above mentioned arguments and Theorem~\ref{thm4} guarantee the convergence to balance even if the covariate
distribution is unknown, by replacing at each step $p_{jl}$ with its
current estimate.
\end{ex}

Examples~\ref{exRDBCD} and~\ref{exCABCD} deal with procedures such
that, at every step $n$, the allocation rule $\varphi_{jl}$ depends
only on the current allocation proportion $\uppi_n(j,l)$, namely
satisfying (\ref{stratrand}). We now present additional examples where
$\varphi_{jl}$ is a function of the whole vectorial allocation
proportion~$\boldsymbol{\uppi}_n$.

%
\begin{ex}
Minimization methods \cite{Poc75,Tav74}
are stratified randomization procedures intended to achieve the
so-called marginal balance among covariates. In general,
they depend on the definition of a measure of overall imbalance among the
assignments which summarizes the imbalances between the treatment
groups for each level of every factor. Assuming the well-known variance
method proposed by Pocock and Simon \cite{Poc75},
the $(n+1)$st subject with
covariate profile $\mathbf{Z}_{n+1}=(t_{j},w_{l})$ is assigned to
treatment $A$ with probability
%
%
\begin{equation}
\label{peS1} \Pr\bigl(\delta_{n+1}=1\vert\Im_{n},
\mathbf{Z}_{n+1}=(t_{j},w_{l})\bigr)= %
\cases{ p, & \quad$D_{n}(t_{j}) + D_{n}(w_{l})<0$,\vspace*{1pt}
\cr
\frac{1}{2}, & \quad$D_{n}(t_{j}) +
D_{n}(w_{l})=0$,\vspace*{1pt}
\cr
1-p, & \quad$D_{n}(t_{j})
+ D_{n}(w_{l})>0$, } %
\end{equation}
where $p\in[1/2;1]$, $D_n(t_j)$ is the imbalance between the
two arms within the level $t_j$
of $T$ and, similarly, $D_n(w_l)$ represents the imbalance at the
category $w_l$ of $W$.
At each step $n$, note that $\sgn\{ D_{n}(t_{j})\}=\sgn\{n^{-1}
D_{n}(t_{j})\}$ where
%
%
\begin{equation}
\label{joint} n^{-1}D_{n}(t_{j})=\sum
_{l=0}^L \bigl[2 \uppi_n(j,l) -1
\bigr]\hat {p}_{jln}, \qquad\mbox{for any } j=0,\ldots,J
\end{equation}
and analogously for $D_{n}(w_{l})$. Thus, allocation rule (\ref{peS1})
corresponds to
\[
\varphi^{\mathrm{PS}}_{jl} (\boldsymbol{
\uppi}_n ;\mathbf{S}_{n} )= %
\cases{ p, &
\quad$\displaystyle\sum_{l=0}^{L} \biggl[
\uppi_n(j,l) - \frac
{1}{2} \biggr]\hat {p}_{jln} +
\sum_{j=0}^{J} \biggl[
\uppi_n(j,l) - \frac{1}{2} \biggr] \hat{p}_{jln}
<0$,
\cr
\displaystyle\frac{1}{2}, & \quad$\displaystyle\sum
_{l=0}^{L} \biggl[ \uppi_n(j,l) -
\frac{1}{2} \biggr] \hat{p}_{jln} + \sum
_{j=0}^{J} \biggl[ \uppi_n(j,l) -
\frac
{1}{2} \biggr] \hat{p}_{jln} =0$,
\cr
1-p,& \quad$
\displaystyle\sum_{l=0}^{L} \biggl[
\uppi_n(j,l) - \frac{1}{2} \biggr] \hat {p}_{jln} +
\sum_{j=0}^{J} \biggl[
\uppi_n(j,l) - \frac{1}{2} \biggr] \hat{p}_{jln}
>0$, } %
\]
and therefore the problem consists in finding the vectorial generalized
downcrossing of
$\boldsymbol{\varphi}^{\mathrm{PS}} (\boldsymbol{\uppi
}_n;\mathbf{S}_{n})=[\varphi^{\mathrm{PS}}_{jl}(\boldsymbol{\uppi}_n;
\mathbf{S}_{n})\dvt j =0,\ldots,J; l = 0, \ldots,L]$. Since at each step $n$,
$\varphi^{\mathrm{PS}}_{jl}(\boldsymbol{\uppi}_n; \mathbf{S}_{n})$ is
decreasing in $\uppi_{n}(j,l)$ for any $j =0,\ldots,J$ and
$l = 0, \ldots,L$, then the vectorial generalized downcrossing is
unique. It is
straightforward to see that
$\boldsymbol{\varphi}^{\mathrm{PS}}(\mathbf{t}^{B};\boldsymbol
{\varsigma
})=\mathbf{t}^{B}$ for every $n$ and thus
$\lim_{n\rightarrow\infty} \boldsymbol{\uppi}_n=\mathbf{t}^{B}$ a.s.
\end{ex}

%
\begin{ex}
In order to include minimization methods and stratified randomization
procedures in a unique framework, Hu and Hu \cite{Hu12}
have recently suggested to assign subject $(n+1)$ belonging to the
stratum $(t_{j},w_{l})$ to $A$ with probability
%
%
\begin{equation}
\label{huhuproc} \Pr\bigl(\delta_{n+1}=1\vert\Im_{n},
\mathbf{Z}_{n+1}=(t_{j},w_{l})\bigr)= %
\cases{ p, & \quad$\bar{D}_{n}(j,l)<0$,\vspace*{1pt}
\cr
\tfrac{1}{2}, &
\quad$\bar{D}_{n}(j,l)=0$,\vspace*{1pt}
\cr
1-p, & \quad$\bar{D}_{n}(j,l)>0$,
} %
\end{equation}
where the overall measure of imbalance
\[
\bar{D}_{n}(j,l)= \omega_{g}D_{n}+
\omega_{T}D_{n}(t_{j})+\omega
_{W}D_{n}(w_{l})+\omega_{s}D_{n}(j,l)
\]
is a weighted average of the three types of imbalances actually
observed (global, marginal and within-stratum), with non-negative
weights $\omega
_{g}$ (global), $\omega_{T}$ and $\omega_{W}$ (covariate marginal)
and $\omega_{s}$ (stratum) chosen such that $\omega_{g}+\omega
_{T}+\omega
_{W}+\omega_{s}=1$.

By choosing the weights $\omega_g$, $\omega_T$, $\omega_W$ such
that\vspace*{-1pt}
%
%
\begin{equation}
\label{condcin2} (JL+J+L)\omega_g+J\omega_W+L
\omega_T<1/2,
\end{equation}
the authors proved that the probabilistic structure of the within
stratum imbalance is that of a positive recurrent Markov chain and this
implies that procedure (\ref{huhuproc}) is asymptotically balanced,
both marginally and jointly. However, as stated by the authors, only
strictly positive choices of the stratum weight $\omega_{s}$ satisfy
(\ref{condcin2}), and thus their result cannot be applied to Pocock
and Simon's minimization method.

The asymptotic behaviour of Hu and Hu's design can be illustrated in a
different way by applying Theorem~\ref{thm4}. Since $\sgn\{\bar
{D}_{n}(j,l) \}=\sgn\{n^{-1}\bar{D}_{n}(j,l) \}$ and\vspace*{-1pt}
%
%
\begin{equation}
\label{marginal} n^{-1}D_n=2\uppi_n-1=\sum
_{j=0}^J\sum_{l=0}^L
\bigl[2\uppi _n(j,l)-1 \bigr] \hat{p}_{jln},
\end{equation}
from (\ref{joint}) it follows that\vspace*{-1pt}
\begin{eqnarray*}
  \sgn\bigl\{n^{-1}\bar{D}_{n}(j,l)
\bigr\}
 &=& \sgn\Biggl\{\omega_{g}\sum_{j=0}^J
\sum_{l=0}^L \biggl[\uppi_n(j,l)-
\frac{1}{2} \biggr] \hat {p}_{jln}+ \omega_{T}\sum
_{l=0}^L \biggl[\uppi_n(j,l)-
\frac
{1}{2} \biggr] \hat{p}_{jln}
\\[-1pt]
&&\phantom{\sgn\Biggl\{}{}+ \omega_{W} \sum_{j=0}^J
\biggl[\uppi_n(j,l)- \frac{1}{2} \biggr] \hat
{p}_{jln}+ \omega_{s} \biggl[\uppi_n(j,l)-
\frac{1}{2} \biggr] \hat{p}_{jln} \Biggr\}. 
\end{eqnarray*}
Thus, at each step $n$ procedure (\ref{huhuproc}) corresponds to an
allocation rule $\varphi^{\mathrm{HH}}_{jl} (\boldsymbol{\uppi}_n
;\mathbf{S}_{n}  )$
which is decreasing in $\uppi_{n}(j,l)$ for any $j =0,\ldots,J$ and
$l = 0, \ldots,L$. Since $\boldsymbol{\varphi}^{\mathrm
{HH}}(\mathbf{t}^{B};\boldsymbol{\varsigma})=\mathbf{t}^{B}$,
then the unique vectorial generalized downcrossing is $\mathbf{t}^{B}$ for any $n$ and therefore $\lim_{n\rightarrow\infty}
\boldsymbol{\uppi}_n=\mathbf{t}^{B}$ a.s.\vspace*{-1pt}
\end{ex}

Under the same arguments, it can be easily proved the convergence to
balance of several extensions of minimization methods (see, e.g.,
\cite{Her05,Signor93}), since at each step $n$ every type of imbalance
(global, marginal and within-stratum) is a linear combination of the
allocation proportions $\uppi_n(j,l)$'s.\vspace*{-1pt}

%
\begin{ex}\label{exatk}
Assuming the liner homoscedastic model without treatment/covariate
interaction in the form\vspace*{-1pt}
%
%
\begin{equation}
\label{linearsenza} E(Y_{i}) =\delta_{i} \mu_{A}+(1-
\delta_{i}) \mu_{B}+ \widetilde {f}(\mathbf{z}_{i})^{t}
\boldsymbol{\beta},\qquad i\geq1,
\end{equation}
where $\widetilde{f}(\cdot)$ is a known vector function and
$\boldsymbol{\beta}$ is a vector of common regression parameters.

Put $\mathcal{F}_n= [\widetilde{f}(\mathbf{z}_{i})^{t} ]$ and $\mathds{F}_n=[\mathbf{1}_n \dvtx \mathcal{F}_n]$,
Atkinson \cite{Atk82} introduced his biased coin design by assigning
the $(n+1)$st patient to $A$ with probability\vspace*{-1pt}
%
%
\begin{eqnarray}
\label{atkinsondes} %
&& \Pr(\delta_{n+1}=1\vert
\Im_n, \mathbf{Z}_{n+1})
\nonumber
\\[-8.5pt]
\\[-8.5pt]
&&\quad=\frac{\{1-(1; \widetilde{f} (\mathbf{z}_{n+1})^t )
(\mathds{F}_n^t \mathds{F}_n)^{-1} \mathbf{b}_n \}^2}{\{1-(1;
\widetilde{f} (\mathbf{z}_{n+1})^t )
(\mathds{F}_n^t \mathds{F}_n)^{-1} \mathbf{b}_n \}^2+\{1+(1;
\widetilde{f} (\mathbf{z}_{n+1})^t )
( \mathds{F}_n^t \mathds{F}_n)^{-1} \mathbf{b}_n \}^2},
\nonumber
\end{eqnarray}
where $\mathbf{b}_n^t=(2\delta_1-1, \ldots,2\delta_n-1)\mathds
{F}_n$ is usually called the imbalance vector.

As showed in \cite{Baz11}, in the presence of all interactions among
covariates we obtain
\[
\mathbf{b}_n^t=\bigl(D_n,D_n(t_1),
\ldots,D_n(t_J), D_n(w_1),
\ldots ,D_n(w_L),D_n(1,1),
\ldots,D_n(J,L)\bigr)
\]
and Atkinson's procedure (\ref{atkinsondes}) becomes a stratified
randomization rule with
%
%
\begin{eqnarray}
\label{DABCD} &&\Pr\bigl(\delta_{n+1}=1\vert\Im_n,
\mathbf{Z}_{n+1}=(t_j,w_l)\bigr)
\nonumber
\\[-8pt]
\\[-8pt]
&&\quad=
\frac
{  ( 1- {D_n(j,l)}/{N_n(j,l)} )^2 }{ ( 1-
{D_n(j,l)}/{N_n(j,l)} )^2+ ( 1+ {D_n(j,l)}/{N_n(j,l)} )^2}.\nonumber
\end{eqnarray}
Clearly, procedure (\ref{DABCD}) corresponds to
\[
\varphi_{jl} (\boldsymbol{\uppi}_n ;
\mathbf{S}_{n} )=\frac{  [ 1- \uppi_n(j,l) ]^2 }{ [ 1- \uppi
_n(j,l) ]^2+\uppi_n(j,l)^2},
\]
so (\ref{stratrand}) holds; thus, by Theorem~\ref{thm4}, $\lim_{n\rightarrow\infty} \boldsymbol{\uppi}_n=\mathbf{t}^{B}$.

When the model is not full, then $\mathbf{b}_n$ contains
all the imbalance terms corresponding to the included
interactions. Thus, from (\ref{joint}) and (\ref{marginal}),
$(1;\widetilde{f} (\mathbf{z}_{n+1})^t )
(\mathds{F}_n^t \mathds{F}_n)^{-1} \mathbf{b}_n$ is a linear
function of the allocation proportion $\boldsymbol{\uppi}_n$, so that
Theorem~\ref{thm4} can be applied by the previous arguments.
\end{ex}

\section{Downcrossing and stochastic approximation methods}\label{sec7}

By combining the concept of downcrossing and stopping times of
stochastic processes, we demonstrated the almost
sure convergence of the treatment allocation proportion for a vast
class of adaptive procedures. In general, this is due to the fact that
the asymptotic behavior of $\uppi_n$ coincides with that of the sequence
of downcrossing points of the corresponding allocation function. An
alternative way to characterize the same large-sample behavior is
provided by the Stochastic Approximation (SA) methods (see, e.g., \cite
{Ben90,Duf97,Kus03,Lai03}) and the asymptotic theory of
super-martingales \cite{Hall80}. Indeed, considered now the AA
procedures of Section~\ref{sec3}, from (\ref{SA}) at each step $n\geq1$,
%
%
\begin{eqnarray}
\label{SA1} %
\uppi_{n+1} &=&
\uppi_{n} \biggl(\frac{n}{n+1} \biggr) + \frac
{1}{n+1} \bigl
\{\Delta M_{n+1} + \varphi^{\mathrm{AA}} (\uppi _n ) \bigr
\}
\nonumber
\\[-8pt]
\\[-8pt]
&=& \uppi_{n} - \frac{1}{n+1} \bigl\{ \uppi_n -
\varphi^{\mathrm
{AA}} (\uppi _n ) \bigr\}+ \frac{\Delta M_{n+1}}{n+1}.
\nonumber
\end{eqnarray}
Therefore, the allocation proportion follows the classical
Robbins--Monro \cite{Robb51} recursive relation:
%
%
\begin{equation}
\label{SA2} \uppi_{n+1} = \uppi_{n} - a_nH(
\uppi_n) +a_n\Delta M_{n+1},
\end{equation}
where $H( x)=x-\varphi^{\mathrm{AA}}  (x  )$ and $a_n=(n+1)^{-1}$.
Note that:
\begin{itemize}
\item$\{\Delta M_{n}\}$ is a sequence of bounded martingale
differences, so that for any $n$
\[
E[\Delta M_{n+1}\vert\Im_n]=0 \quad\mbox{and}\quad E\bigl[
\Delta M_{n+1}^2\vert\Im _n\bigr] =
\varphi^{\mathrm{AA}} (\uppi_n )\bigl[1-\varphi ^{\mathrm{AA}} (
\uppi_n )\bigr]\leq1 ;
\]
\item$\lim_{n\rightarrow\infty}a_n=0$, $\sum_{i=1}^{\infty
}a_i=\infty$ and $\sum_{i=1}^{\infty}a_i^2<\infty$;
\item the function $H(\cdot)$ is increasing, since $\varphi
^{\mathrm{AA}}(\cdot)$ is assumed to be decreasing, and furthermore
$(x-t)H(x)>0$ since $t$ is the unique downcrossing of $\varphi
^{\mathrm{AA}}(\cdot)$.
\end{itemize}
Therefore, it follows that $\uppi_n\rightarrow t$ a.s.

The same asymptotic result can be obtained via a super-martingale
approach since, from (\ref{SA1})
\[
E\bigl[ (\uppi_{n+1}-t)^2 \vert\Im_n \bigr] =
(\uppi_{n}-t)^2- 2a_n(\uppi_{n}-t)
H(\uppi_n) +a^2_n \bigl\{H^2(
\uppi_n)+E\bigl[\Delta M_{n+1}^2\vert\Im
_n\bigr] \bigr\},
\]
where the last term of the r.h.s. is asymptotically negligible, due to
the properties of $a_n$, $H(\cdot)$ and $\Delta M_{n+1}$, and $(\uppi
_{n}-t) H(\uppi_n)$ is always non-negative. Thus, the quantity
%
%
\begin{equation}
\label{SA4} (\uppi_n-t)^2 \mbox{ is a non-negative
almost super-martingale},
\end{equation}
namely it is asymptotically equivalent to a non-negative
super-martingale; therefore it converges almost surely and, in our
setting, it vanishes asymptotically. If we further assume $\varphi
^{\mathrm{AA}}$ differentiable, then
\[
\frac{\partial\varphi^{\mathrm{AA}}(x)}{\partial x} \bigg|_{x=t}=\varphi'(t)<0,
\]
so that from Fabian's theorem \cite{Fab68}
%
%
\begin{equation}
\label{SA5} \sqrt{n}(\uppi_n-t) \hookrightarrow N \biggl(0;
\frac
{t(1-t)}{1-2\varphi'(t)} \biggr),
\end{equation}
since $\lim_{n\rightarrow\infty}E[\Delta M_{n+1}^2\vert\Im_n]
=\varphi^{\mathrm{AA}}  (t  )[1-\varphi^{\mathrm{AA}}
(t
)]=t(1-t)$ (due to the continuity of $\varphi^{\mathrm{AA}}$).
The asymptotic variance in (\ref{SA5}) could help distinguish between
different AA rules intended to achieve the same target allocation
proportion; clearly, this variance increases as $\varphi'(t)$ grows
(i.e. as the random component in the assignments increases).

\begin{ex}
Adopting Wei's Adaptive BCD in (\ref{weiallfunc}) with $\mathfrak
{f}(\cdot)$ differentiable, then
%
%
\begin{equation}
\label{asnormwei} \lim_{n\rightarrow\infty}\uppi_n=
\frac{1}{2} \qquad\mbox{a.s.} \quad\mbox{and} \quad\sqrt {n} \biggl(
\uppi_n-\frac{1}{2} \biggr) \hookrightarrow N \biggl(0;
\frac
{1}{4 [1-2\mathfrak{f}' ( {1}/{2} )
]} \biggr).
\end{equation}
While assuming CR design
\[
\lim_{n\rightarrow\infty}\uppi_n=\frac{1}{2} \qquad
\mbox{a.s.} \quad\mbox{and}\quad\sqrt {n} \biggl(\uppi_n-
\frac{1}{2} \biggr) \hookrightarrow N \biggl(0; \frac
{1}{4} \biggr),
\]
namely under CR the asymptotic variance of the allocation proportion is
always greater than Wei's one (since $\mathfrak{f}$ is decreasing).
This reduction in terms of asymptotic variance lies in the fact that
Wei's rule favors at each step the assignments of the under-represented
treatment, that is, it is adapted to the sequence of previous allocations.
\end{ex}

\begin{rem}
Even if SA theory can be applied in the context of adaptive procedures,
we would like to stress some differences between them:
\begin{itemize}
\item in the classical Robbins--Monro scheme, there is a controllable
design variable taking values in $\mathbb{R}$ that follows itself the
SA recursion, while in our setting the design space is discrete, since
$\delta_n \in\{0;1\}$, and only the allocation proportions $\uppi_n$s
follow (\ref{SA2});
\item within SA framework the function $H(\cdot)$ is analytically
unknown and it cannot be observed directly, but it could be observed
only with a stochastic perturbation; whereas in our setting the
allocation function is the only ingredient chosen by the experimenter
and thus it is completely known (while the assignments $\delta_n$s are
randomly generated by the allocation rule).
\end{itemize}
\end{rem}

As regards the other types of adaptive procedures, (\ref{SA4}) is
still a non-negative almost super-martingale provided that the
downcrossing $t$ of $\varphi^{\mathrm{AA}}(\cdot)$ is substituted by the
generalized (vectorial) downcrossing of the corresponding allocation
function. For instance, in the RA case $t\mapsto t (\widehat
{\boldsymbol{\gamma}}_{n}  )$ and the asymptotic behavior of
$\uppi_n$ coincides with that of $t (\widehat{\boldsymbol
{\gamma
}}_{n} )$; therefore, assuming $t (\cdot )$
continuous, as $n$ grows $\uppi_n\rightarrow t (\boldsymbol
{\gamma
} )$ a.s. Furthermore, by adding suitable continuity and
differentiability conditions for $t (\cdot )$ and the
allocation function, it is possible to derive the asymptotic normality
of the allocation proportions as in (\ref{asnormwei}) (see, e.g.,
\cite{Hu04} for RA procedures and \cite{Zha07} for CARA designs).
Note that the case of CARA designs with categorical covariates is a
multidimensional SA scheme where, at each step $n$, (i) the evolution at
each stratum depends, in general, on the information gathered up to
that step from all the strata and (ii) the constant $a_n$ should be
replaced by the random vector $\mathbf{a}_n=
[N_{n+1}(j,l)^{-1}\mathbh{1}_{\{Z_i=(t_{j},w_{l})\}}, j=0,\ldots
,J;l=0,\ldots,L  ]$ and therefore the Robbins--Siegmund's lemma
(1971) should be applied (see \cite{Lai03,Rob71}).

\begin{appendix}
\section*{Appendix}\label{app}
\subsection{Proof of Theorem \texorpdfstring{\protect\ref{thm2}}{4.1}}\label{A1}

At each step $n$, consider the squared integrable martingale process $\{
M_{n}; \Im_n \}$, where $M_{n}=\sum_{i=1}^{n}\Delta M_i=\sum_{i=1}^{n} \{\delta_i-E(\delta_i|\Im_{i-1}) \}$ and $\Im
_n=\sigma(\delta_1,\ldots,\delta_n;Y_1,\ldots,Y_n)$.

Let $\lambda_n=\max \{s\dvt 2m+1 \leq s \leq n, \uppi_s \leq
t(\widehat{\boldsymbol{\gamma}}_{s})  \}$, with $\max
\emptyset=2m$. Thus at each step $i>\lambda_n$, $\varphi^{\mathrm
{RA}}
(\uppi_{i} ;\widehat{\boldsymbol{\gamma}}_{i}  ) \leq
t(\widehat{\boldsymbol{\gamma}}_{i})$ and therefore
\begin{eqnarray*}
\widetilde{N}_n& = &\widetilde{N}_{\lambda_n+1} +
\sum_{k=\lambda
_n+2}^{n} \Delta M_k+
\sum_{k=\lambda_n+2}^{n} \varphi^{\mathrm
{RA}} (
\uppi_{k-1} ;\widehat{\boldsymbol{\gamma}}_{k-1} )
\\
&\leq& \widetilde{N}_{\lambda_n} +1+M_n -M_{\lambda_n+1} +
\sum_{k=\lambda_n+2}^{n} t (\widehat{\boldsymbol{
\gamma}}_{k-1} ). %
\end{eqnarray*}
Since $\widetilde{N}_{\lambda_n} \leq\lambda_n t (\widehat
{\boldsymbol{\gamma}}_{\lambda_n}  )$ we obtain
\begin{eqnarray*}
\widetilde{N}_n - n t (\widehat{\boldsymbol{
\gamma}}_{n} )&\leq& \Biggl(\lambda_n t (\widehat{
\boldsymbol{\gamma }}_{\lambda_n} ) - \sum_{k=2}^{\lambda_n+1}
t (\widehat {\boldsymbol{\gamma}}_{k-1} ) \Biggr)+ M_n
-M_{\lambda_n+1} +1-t (\widehat{\boldsymbol{\gamma}}_{0} )
\\
&&{}- \Biggl( n t (\widehat{\boldsymbol{\gamma}}_{n} )- \sum
_{k=1}^{n} t (\widehat{\boldsymbol{
\gamma}}_{k-1} ) \Biggr), %
\end{eqnarray*}
where $t (\widehat{\boldsymbol{\gamma}}_{0} )=t_0\in
[0;1]$ is a constant depending on the initial stage. Furthermore, as $n
\rightarrow\infty$, at least one of the number of assignments to the
treatments, $\widetilde{N}_n$ and $(n-\widetilde{N}_n)$, tends to
infinity a.s. As showed in \cite{Hu09}, in either case $\widehat
{\boldsymbol{\gamma}}_{n}$
has finite limit so that, from the properties of $t (\widehat
{\boldsymbol{\gamma}}_{n}  )$, almost surely there exists a $v$
such that
\setcounter{equation}{0}
%
\begin{equation}
\label{l11} t (\widehat{\boldsymbol{\gamma}}_{n} )\rightarrow v
\qquad\mbox{a.s.}
\end{equation}
and so $\lim_{n\rightarrow\infty}t (\widehat{\boldsymbol
{\gamma}}_{n}  )- n^{-1}\sum_{k=1}^{n} t (\widehat
{\boldsymbol{\gamma}}_{k-1}  ) = 0$ a.s.
As $n\rightarrow\infty$, then $\lambda_n\rightarrow\infty$ or
$\sup_n \lambda_n< \infty$; in either case,
$\lim_{n\rightarrow\infty} n^{-1}\lambda_n [t (\widehat
{\boldsymbol{\gamma}}_{\lambda_n}  )- \lambda_n^{-1} \sum_{k=1}^{\lambda_n} t (\widehat{\boldsymbol{\gamma}}_{k}
)  ]=0$ a.s.
and therefore
%
%
\begin{equation}
\label{lim1} \bigl[\uppi_n - t(\widehat{\boldsymbol{
\gamma}}_{n})\bigr] ^+ \rightarrow0 \qquad\mbox{a.s.}
\end{equation}
Analogously,
%
%
\begin{equation}
\label{lim2} \bigl[(1-\uppi_n) - \bigl(1- t(\widehat{\boldsymbol{
\gamma }}_{n}) \bigr) \bigr]^+ \rightarrow0 \qquad\mbox{a.s.}
\end{equation}
From (\ref{lim1}) and (\ref{lim2}), as $n$ tends to infinity $\uppi_n
- t(\widehat{\boldsymbol{\gamma}}_{n}) \rightarrow0 $ a.s. and
by (\ref{l11})
$\lim_{n\rightarrow\infty} {\uppi_{n}}=\lim_{n\rightarrow\infty}
t(\widehat{\boldsymbol{\gamma}}_{n}) =v $ a.s.
Since $0< v <1$, then $0< 1-v <1$ and thus $\lim_{n\rightarrow\infty
} \widetilde{N}_n \rightarrow\infty$ a.s. and $ \lim_{n\rightarrow\infty}(n-\widetilde{N}_n) \rightarrow\infty$ a.s.
Therefore, $\lim_{n\rightarrow\infty}\widehat{\boldsymbol{\gamma
}}_{n}\rightarrow\boldsymbol{\gamma}$ a.s. and from the continuity
of the downcrossing $\lim_{n\rightarrow\infty} t(\widehat
{\boldsymbol{\gamma}}_{n})= t(\boldsymbol{\gamma})=v$ a.s., that
is, $\lim_{n\rightarrow\infty} \uppi_n= t(\boldsymbol{\gamma}) $ a.s.

\subsection{Proof of Theorem \texorpdfstring{\protect\ref{thmCARAc}}{5.1}}\label{A2}

If $\varphi^{\mathrm{CARA}}$ is decreasing in $\uppi_n$, then
$\tilde{\varphi
}_{\mathbf{Z}} $ is also decreasing in $\uppi_n$, so that the
generalized downcrossing
is unique and lies in $(0;1)$. Letting now $\Im_n=\sigma(\delta
_1,\ldots,\delta_n;Y_1,\ldots,Y_n;\mathbf{Z}_{1},\ldots
,\break \mathbf{Z}_{n} )$, then $E(\delta_i|\Im_{i-1})=E_{\mathbf
{Z}_{i}} [ \varphi (\uppi_{i-1} ;\widehat{\boldsymbol
{\gamma}}_{i-1},\mathbf{S}_{i-1}, f(\mathbf{Z}_{i}) )
]$ and $\Delta M_i=\delta_i-E(\delta_i|\Im_{i-1})$.
Then $\{\Delta M_i; i\geq1\} $ is a sequence of bounded martingale
differences with $|\Delta M_i |\leq1$ for any $i\geq1$; thus
$\{M_{n}=\sum_{i=1}^{n}\Delta M_i ; \Im_n \}$ is a martingale with
$\sum_{k=1}^{n} E[(\Delta M_i)^2 |\Im_{k-1}] \leq n$.
Let $\zeta_n=\max \{\vartheta\dvt 2m+1 \leq\vartheta\leq n,
\uppi
_\vartheta\leq\tilde{t}_{\mathbf{Z}} (\widehat{\boldsymbol
{\gamma}}_{\vartheta},\mathbf{S}_\vartheta)  \}$, with
$\max\emptyset=2m$. So that $\forall i>\zeta_n$ we have
$\tilde{\varphi}_{\mathbf{Z}}  (\uppi_{i} ;\widehat
{\boldsymbol{\gamma}}_{i},\mathbf{S}_i ) \leq\tilde
{t}_{\mathbf{Z}}
(\widehat{\boldsymbol{\gamma}}_{i},\mathbf{S}_i)$. Note that
\begin{eqnarray*}
\widetilde{N}_{n} &=& \widetilde{N}_{\zeta_n+1} +
\sum_{k=\zeta
_n+2}^{n} \Delta M_k+
\sum_{k=\zeta_n+2}^{n} E(\delta_{k}| \Im
_{k-1})
\\
&\leq& \widetilde{N}_{\zeta_n}+1 +M_n -M_{\zeta_n+1} +
\sum_{k=\zeta_n+2}^{n} \tilde{\varphi}_{\mathbf{Z}}
(\uppi _{k-1} ;\widehat{\boldsymbol{\gamma}}_{k-1},\mathbf
{S}_{k-1} )
\\
&<& \widetilde{N}_{\zeta_n} +1+M_n -M_{\zeta_n+1} + \sum
_{k=\zeta
_n+2}^{n} \tilde{t}_{\mathbf{Z}} (
\widehat{\boldsymbol {\gamma}}_{k-1},\mathbf{S}_{k-1} )
\\
&=& \widetilde{N}_{\zeta_n}+1+M_n -M_{\zeta_n+1} + \sum
_{k=1}^{n} \tilde{t}_{\mathbf{Z}} (
\widehat{\boldsymbol{\gamma }}_{k-1},\mathbf{S}_{k-1} ) -
\sum_{k=1}^{\zeta_n+1} \tilde{t}_{\mathbf{Z}} (
\widehat{\boldsymbol{\gamma }}_{k-1},\mathbf{S}_{k-1} ).
\end{eqnarray*}
Since $ \widetilde{N}_{\zeta_n} \leq\zeta_n \tilde{t}_{\mathbf
{Z}} (\widehat{\boldsymbol{\gamma}}_{\zeta_n},\mathbf
{S}_{\zeta_n})$, then
\begin{eqnarray*}
\widetilde{N}_{n} - n \tilde{t}_{\mathbf{Z}} (\widehat {
\boldsymbol{\gamma}}_{n},\mathbf{S}_{n}) &\leq& \Biggl(
\zeta_n \tilde{t}_{\mathbf{Z}} (\widehat{\boldsymbol {
\gamma}}_{\zeta_n},\mathbf{S}_{\zeta_n}) - \sum
_{k=2}^{\zeta
_n+1} \tilde{t}_{\mathbf{Z}} (\widehat{
\boldsymbol{\gamma }}_{k-1},\mathbf{S}_{k-1} ) \Biggr)
\\
&&{}+
M_n -M_{\zeta_n+1} +1-\tilde{t}_{\mathbf{Z}} (\widehat {
\boldsymbol{\gamma}}_{0},\mathbf{S}_{0} )- \Biggl( n
\tilde{t}_{\mathbf{Z}} (\widehat{\boldsymbol{\gamma }}_{n},
\mathbf{S}_{n})- \sum_{k=1}^{n}
\tilde{t}_{\mathbf
{Z}} (\widehat{\boldsymbol{\gamma}}_{k-1},
\mathbf{S}_{k-1} ) \Biggr).
\end{eqnarray*}
Moreover, as $n \rightarrow\infty$, at least one of
the number of assignments to the treatments, $\widetilde{N}_n$ and
$(n-\widetilde{N}_n)$, tends to infinity a.s. In either case from the
properties of $\tilde{t}_{\mathbf{Z}} (\widehat{\boldsymbol
{\gamma}}_{n},\mathbf{S}_{n} )$, almost surely there exists
a $\tilde{\upsilon}$ such that
%
%
\begin{equation}
\label{l1} \tilde{t}_{\mathbf{Z}} (\widehat{\boldsymbol{\gamma
}}_{n},\mathbf{S}_{n} ) \rightarrow\tilde{\upsilon}
\qquad\mbox{a.s.}
\end{equation}
and so
\[
\tilde{t}_{\mathbf{Z}} (\widehat{\boldsymbol{\gamma }}_{n},
\mathbf{S}_{n})- \frac{1}{n}\sum
_{k=1}^{n} \tilde {t}_{\mathbf{Z}} (\widehat{
\boldsymbol{\gamma }}_{k-1},\mathbf{S}_{k-1} ) \rightarrow0
\qquad\mbox{a.s.}
\]
As $n\rightarrow\infty$, then $\zeta_n\rightarrow\infty$ or $\sup_n \zeta_n< \infty$; in either case,
\[
\frac{\zeta_n}{n} \Biggl\{\tilde{t}_{\mathbf{Z}} (\widehat {\boldsymbol{
\gamma}}_{\zeta_n},\mathbf{S}_{\zeta_n}) - \frac
{1}{\zeta_n} \sum
_{k=1}^{\zeta_n} \tilde{t}_{\mathbf{Z}} (
\widehat{\boldsymbol{\gamma }}_{k}, \mathbf{S}_{k} )
\Biggr\}\rightarrow0 \qquad\mbox{a.s.}
\]
and therefore
%
%
\begin{equation}
\label{lim11} \bigl[\uppi_n - \tilde{t}_{\mathbf{Z}} (\widehat{
\boldsymbol {\gamma}}_{n}, \mathbf{S}_{n} )\bigr] ^+
\rightarrow0 \qquad \mbox{a.s.}
\end{equation}
Analogously,
%
%
\begin{equation}
\label{lim21} \bigl[(1-\uppi_n) - \bigl(1-\tilde{t}_{\mathbf{Z}}
(\widehat{\boldsymbol{\gamma}}_{n}, \mathbf{S}_{n} )
\bigr) \bigr]^+ \rightarrow0 \qquad\mbox{a.s.}
\end{equation}
From (\ref{lim11}) and (\ref{lim21}), $\lim_{n\rightarrow\infty
}\uppi_n - \tilde{t}_{\mathbf{Z}}  (\widehat{\boldsymbol
{\gamma}}_{n}, \mathbf{S}_{n} )=0$ a.s.
and therefore by (\ref{l1}) $\lim_{n\rightarrow\infty} {\uppi
_{n}}=\lim_{n\rightarrow\infty} \tilde{t}_{\mathbf{Z}}
(\widehat{\boldsymbol{\gamma}}_{n}, \mathbf{S}_{n} )
=\tilde{\upsilon}$ a.s.
Since $0< \tilde{\upsilon} <1$, then $0< 1-\tilde{\upsilon} <1$ and
$\lim_{n\rightarrow\infty} \widetilde{N}_n \rightarrow\infty$
a.s. and $ \lim_{n\rightarrow\infty}(n-\widetilde{N}_n)
\rightarrow\infty$ a.s.
Therefore, $\lim_{n\rightarrow\infty}\widehat{\boldsymbol{\gamma
}}_{n}\rightarrow\boldsymbol{\gamma}$ a.s. and also $\lim_{n\rightarrow\infty}\mathbf{S}_{n}\rightarrow\boldsymbol
{\varsigma}$ a.s., so that from the continuity of the downcrossing
$\lim_{n\rightarrow\infty} \tilde{t}_{\mathbf{Z}}
(\widehat{\boldsymbol{\gamma}}_{n}, \mathbf{S}_{n} )=
\tilde{t}_{\mathbf{Z}}(\boldsymbol{\gamma},\boldsymbol
{\varsigma})=\tilde{\upsilon}$ a.s., namely $\lim_{n\rightarrow
\infty} \uppi_n= \tilde{t}_{\mathbf{Z}}(\boldsymbol{\gamma
},\boldsymbol{\varsigma}) $ a.s.

\subsection{Proof of Theorem \texorpdfstring{\protect\ref{thm4}}{6.1}}\label{A3}

At each step $n$, let $M_{n}(j,l)=\sum_{i=1}^{n}\Delta M_i(j,l)=
\sum_{i=1}^{n}\{\delta_i-E(\delta_i| \mathfrak{G}_{i-1})\}
\mathbh{1}_{\{Z_i=(t_j,w_l)\}}$, where
$\mathfrak{G}_i=\sigma(\Im_i, \mathbf{Z}_{i+1})$.
Therefore, at each stratum $(t_j,w_l)$, $\{
\Delta M_i(j,l); i\geq1\} $ is a sequence of bounded martingale
differences with $|\Delta M_i (j,l) |\leq1$ for any $i\geq1$ and
thus, $\{M_{n}(j,l); \mathfrak{G}_n \}$ is a squared integrable
martingale with $\sum_{k=1}^{n} E[(\Delta M_i(j,l))^2 \vert\mathfrak
{G}_{k-1}] \leq n$.

Let $\xi_n(j,l)=\max \{s\dvt 2m+1 \leq i \leq n, \uppi_i(j,l)
\leq
t_{jl}(\widehat{\boldsymbol{\gamma}}_{i},\mathbf{S}_{i} )
\}$, with $\max\emptyset=2m$, then there exists a given
stratum $(t_{j'},w_{l'})$ such that $\xi_n(j',l')=\max_{jl} \xi
_n(j,l)$. Therefore, for any $i>\xi_n(j',l')$, at each stratum $\uppi
_i(j,l) > t_{jl}$ and, by Definition~\ref{DC3}, $\varphi_{jl}
(\boldsymbol{\uppi}_i ;\widehat{\boldsymbol{\gamma
}}_{i},\mathbf{S}_{i}  ) \leq t_{jl}(\widehat{\boldsymbol{\gamma
}}_{i},\mathbf{S}_{i})$. Thus
\begin{eqnarray*}
\widetilde{N}_n\bigl(j',l'
\bigr) &=&\widetilde{N}_{\xi_n(j',l')+1}\bigl(j',l'
\bigr) + \sum_{i=\xi_n(j',l')+2}^{n} \Delta
M_i\bigl(j',l'\bigr)+ \sum
_{i=\xi
_n(j',l')+2}^{n} E(\delta_{i}\vert
\mathfrak{G}_{i-1}) \mathbh{1}_{\{
Z_i=(t_{j'},w_{l'})\}}
\\
&\leq& \widetilde{N}_{\xi_n(j',l')}\bigl(j',l'
\bigr)+1 +M_n\bigl(j',l'\bigr)
-M_{\xi
_n(j',l')+1}\bigl(j',l'\bigr)
\\
&&{}+ \sum
_{i=\xi_n(j',l')+2}^{n} \varphi_{j'l'} (\boldsymbol{
\uppi}_{i-1} ;\widehat{\boldsymbol{\gamma }}_{i-1},
\mathbf{S}_{i-1} ) \mathbh{1}_{\{
Z_i=(t_{j'},w_{l'})\}}
\\
&<& \widetilde{N}_{\xi_n(j',l')}\bigl(j',l'\bigr)
+1+M_n\bigl(j',l'\bigr)
-M_{\xi
_n(j',l')+1}\bigl(j',l'\bigr)
\\
&&{}+ \sum
_{i=\xi_n(j',l')+2}^{n} t_{j'l'}(\widehat {\boldsymbol{
\gamma}}_{i-1},\mathbf{S}_{i-1}) \mathbh{1}_{\{
Z_i=(t_{j'},w_{l'})\}}
\\
&=& \widetilde{N}_{\xi_n(j',l')}\bigl(j',l'
\bigr)+1+M_n\bigl(j',l'\bigr)
-M_{\xi
_n(j',l')+1}\bigl(j',l'\bigr)
\\
&&{}+ \sum
_{i=1}^{n} t_{j'l'}(\widehat{\boldsymbol {
\gamma}}_{i-1},\mathbf{S}_{i-1})\mathbh{1}_{\{
Z_i=(t_{j'},w_{l'})\}}
\\
&&{} - \sum_{i=1}^{\xi_n(j',l')+1} t_{j'l'}(
\widehat{\boldsymbol {\gamma}}_{i-1},\mathbf{S}_{i-1})
\mathbh{1}_{\{
Z_i=(t_{j'},w_{l'})\}}. %
\end{eqnarray*}
Moreover, since $\widetilde{N}_{\xi_n(j',l')}(j',l') \leq N_{\xi
_n(j',l')}(j',l') t_{j'l'}(\widehat{\boldsymbol{\gamma}}_{\xi
_n(j',l')},\mathbf{S}_{\xi_n(j',l')})$, then
\begin{eqnarray*}
&&\widetilde{N}_n\bigl(j',l'
\bigr) - N_n\bigl(j',l'\bigr)
t_{j'l'}(\widehat{\boldsymbol {\gamma}}_{n},
\mathbf{S}_{n})
\\
&&\quad \leq  M_n\bigl(j',l'
\bigr) -M_{\xi
_n(j',l')+1}\bigl(j',l'\bigr)+1
\\
 &&\qquad {} + \Biggl(N_{\xi_n(j',l')}\bigl(j',l'\bigr)
t_{j'l'}(\widehat{\boldsymbol {\gamma}}_{\xi_n(j',l')},
\mathbf{S}_{\xi_n(j',l')}) - \sum_{i=1}^{\xi_n(j',l')+1}t_{j'l'}(
\widehat{\boldsymbol{\gamma }}_{i-1},\mathbf{S}_{i-1})
\mathbh{1}_{\{Z_i=(t_{j'},w_{l'})\}} \Biggr)
\\
&&\qquad  {} - \Biggl( N_n\bigl(j',l'\bigr)
t_{j'l'}(\widehat{\boldsymbol{\gamma }}_{n},
\mathbf{S}_{n}) - \sum_{i=1}^{n}t_{j'l'}(
\widehat {\boldsymbol{\gamma}}_{i-1},\mathbf{S}_{i-1})
\mathbh{1}_{\{
Z_i=(t_{j'},w_{l'})\}} \Biggr). %
\end{eqnarray*}
Since $p_{jl}>0$, then as $n\rightarrow\infty$
\[
N_n(j,l)\rightarrow\infty\quad\mbox{and}\quad\frac{M_n}{N_n(j,l)}
\rightarrow0\qquad\mbox{a.s.}\qquad\forall j =0,\ldots,J; l = 0, \ldots,L.
\]
Moreover, as $n\rightarrow\infty$ at least one of $\widetilde
{N}_n(j',l')$ and $[N_n(j',l')-\widetilde{N}_n(j',l')]$ tends to
infinity a.s. Therefore $\widehat{{\boldsymbol{\gamma
}}}_n\rightarrow\boldsymbol{\gamma}$ a.s. and, from (\ref
{ipotesicov}), $\mathbf{S}_n\rightarrow\boldsymbol{\varsigma}$
a.s. Thus, as $n\rightarrow\infty$
\begin{eqnarray*}
t_{j'l'}(\widehat{\boldsymbol{\gamma}}_{n},
\mathbf{S}_{n}) - \frac{ \sum_{i=1}^{n} t_{j'l'}(\widehat{\boldsymbol{\gamma
}}_{i-1},\mathbf{S}_{i-1}) \mathbh{1}_{\{Z_i=(t_{j'},w_{l'})\}}} {
\sum_{i=1}^{n} \mathbh{1}_{\{Z_i=(t_{j'},w_{l'})\}}} \rightarrow0 \qquad
\mbox{a.s.}
\end{eqnarray*}
Furthermore, as $n\rightarrow\infty$
\begin{eqnarray*}
&& t_{j'l'} (\widehat{\boldsymbol{\gamma}}_{\xi
_n(j',l')},
\mathbf{S}_{\xi_n(j',l')} ) \frac{N_{\xi
_n(j',l')}(j',l')}{N_{n}(j',l')}
\\
&&\quad{} - \frac{ \sum_{i=1}^{\xi_n(j',l')+1}
t_{j'l'}(\widehat{\boldsymbol{\gamma}}_{i-1},\mathbf{S}_{i-1})
\mathbh{1}_{\{Z_i=(t_{j'},w_{l'})\}} }{ \sum_{i=1}^{n} \mathbh{1}_{\{
Z_i=(t_{j'},w_{l'})\}}}
\rightarrow0 \qquad\mbox{a.s.}
\end{eqnarray*}
and therefore $\lim_{n\rightarrow\infty} [\uppi_n(j',l') -
t_{j'l'}(\widehat{\boldsymbol{\gamma}}_{n},\mathbf
{S}_{n}) ]^+ =0$ a.s.

Analogously, $\lim_{n\rightarrow\infty} \{[1-\uppi_n(j',l')] -
[1-t_{j'l'}(\widehat{\boldsymbol{\gamma}}_{n},\mathbf
{S}_{n}) ] \}^+ = 0$ a.s. and thus
%
%
\begin{equation}
\label{conv max} \lim_{n\rightarrow\infty}\uppi_n
\bigl(j',l'\bigr)= t_{j'l'}(\boldsymbol {
\gamma },\boldsymbol{\varsigma}) \qquad\mbox{a.s.}
\end{equation}
{\spaceskip=0.2em plus 0.05em minus 0.02em Since $\exists!$ $\mathbf{t} (\widehat{\boldsymbol{\gamma
}}_{n},\mathbf{S}_{n}  )=[t_{jl}(\widehat{\boldsymbol
{\gamma}}_{n},\mathbf{S}_{n}) \dvt j =0,\ldots,J; l = 0, \ldots
,L]$ which is continuous and }$\boldsymbol{\varphi}(\mathbf{t}
(\boldsymbol{\gamma},\boldsymbol{\varsigma}
);\allowbreak  \boldsymbol{\gamma},\boldsymbol{\varsigma})=\mathbf{t}
(\boldsymbol{\gamma},\boldsymbol{\varsigma}  )$, then
from (\ref{conv max}) follows that
\[
\lim_{n\rightarrow\infty}\uppi_n(j,l) = t_{jl}(
\boldsymbol{\gamma },\boldsymbol{\varsigma}) \qquad\mbox{a.s.}\qquad\mbox{for
every } (j,l)\neq\bigl(j',l'\bigr)
\]
and Theorem~\ref{thm4} follows directly.
\end{appendix}

\section*{Acknowledgements}

We are grateful to the referees and the associate editor for their
comments and suggestions, which led to a substantially improved version
of the paper.


%

\printhistory

\end{document}